\newcommand{\Rmnum}[1]{\expandafter\@slowromancap\romannumeral #1@}
\newcommand\blfootnote[1]{%
  \begingroup
  \renewcommand\thefootnote{}\footnote{#1}%
  \addtocounter{footnote}{-1}%
  \endgroup
}
\newcommand{\ba}{\begin{array}}
\newcommand{\ea}{\end{array}}
\newcommand{\be}{\begin{displaymath}}
\newcommand{\ee}{\end{displaymath}}
\newcommand{\ben}{\begin{equation}}  
\newcommand{\een}{\end{equation}}
\newcommand{\bea}{\begin{equation}\begin{aligned}}
\newcommand{\eea}{\end{aligned}\end{equation}}      
\newcommand{\bena}{\begin{eqnarray}}
\newcommand{\eena}{\end{eqnarray}}
\newcommand{\beqa}{\begin{eqnarray*}}
\newcommand{\enqa}{\end{eqnarray*}}
\newcommand{\f}{\frac}
\newcommand{\bc}{\begin{center}}
\newcommand{\ec}{\end{center}}
\newcommand{\bi}{\begin{itemize}}
\newcommand{\ei}{\end{itemize}}
\newcommand{\benu}{\begin{enumerate}}
\newcommand{\eenu}{\end{enumerate}}
\newcommand{\bdes}{\begin{description}}
\newcommand{\edes}{\end{description}}
\newcommand{\bt}{\begin{tabular}}
\newcommand{\et}{\end{tabular}}
\newcommand \abf{{\bf a}}
\newcommand \bbf{{\bf b}}
\newcommand \cbf{{\bf c}}
\newcommand \dbf{{\bf d}}
\newcommand \ibf{{\bf i}}
\newcommand \jbf{{\bf j}}
\newcommand \kbf{{\bf k}}
\newcommand \nbf{{\bf n}}
\newcommand \sbf{{\bf s}}
\newcommand \tbf{{\bf t}}
\newcommand \zbf{{\bf z}}
\newcommand \Abf{{\bf A}}
\newcommand \Bbf{{\bf B}}
\newcommand \Cbf{{\bf C}}
\newcommand \Dbf{{\bf D}}
\newcommand \Ebf{{\bf E}}
\newcommand \Fbf{{\bf F}}
\newcommand \Gbf{{\bf G}}
\newcommand \Hbf{{\bf H}}
\newcommand \Ibf{{\bf I}}
\newcommand \Jbf{{\bf J}}
\newcommand \Pbf{{\bf P}}
\newcommand \Qbf{{\bf Q}}
\newcommand \Rbf{{\bf R}}
\newcommand \Sbf{{\bf S}}
\newcommand \Xbf{{\bf X}}
\newcommand \Ybf{{\bf Y}}
\newtheorem{theorem}{Theorem}[section]
\newcommand{\Rnum}{{\mathbb R}}
\newcommand{\Cnum}{{\mathbb C}}
\newcommand{\Acal}{{\cal A}}
\newcommand{\Bcal}{{\cal B}}
\newcommand{\Ccal}{{\cal C}}
\newcommand{\circlambda}{\mbox{$\Lambda$
             \kern-.85em\raise1.5ex
             \hbox{$\scriptstyle{\circ}$}}\,}
\newtheorem{Theorem}{Theorem}[section]
\newtheorem{Definition}[Theorem]{Definition}
\newtheorem{Proposition}[Theorem]{Proposition}
\newtheorem{Corollary}[Theorem]{Corollary}
\theoremstyle{plain}
\newtheorem{lemma}{Lemma}
\newtheorem*{remark}{Remark}
\newtheorem{property}[Theorem]{Property}
\def\thmhead@plain#1#2#3{%
  \thmname{#1}\thmnumber{\@ifnotempty{#1}{ }\@upn{#2}}%
  \thmnote{ {\the\thm@notefont#3}}}
\let\thmhead\thmhead@plain
\begin{document}

\title{Constructions of Polyphase Golay Complementary Arrays}

\author{Cheng Du, \textit{Student Member, IEEE,} \  Yi Jiang, \textit{Member, IEEE,}    
}
\maketitle
\blfootnote{The work was supported by National Natural Science Foundation of China Grant No. 61771005. ({\em Corresponding author: Yi Jiang})

C. Du and Y. Jiang are Key Laboratory for Information Science of Electromagnetic Waves (MoE), Department of Communication Science and Engineering, School of Information Science and Technology, Fudan University, Shanghai, China (E-mails: cdu15@fudan.edu.cn, yijiang@fudan.edu.cn).}
\begin{abstract}
  Golay complementary matrices (GCM) have recently drawn considerable attentions owing to its potential applications in omnidirectional precoding. In this paper we generalize the GCM to multi-dimensional Golay complementary arrays (GCA) and propose new constructions of GCA pairs and GCA quads. These constructions are facilitated by introducing a set of identities over a commutative ring. We prove that a quaternary GCA pair is feasible if the product of the array sizes in all dimensions is a quaternary Golay number with an additional constraint on the factorization of the product. For the binary GCM quads, we conjecture that the feasible sizes are arbitrary, and verify for sizes within $78\times 78$ and other less densely distributed sizes. For the quaternary GCM quads, all the positive integers within 1000 can be covered for the size in one dimension.
\end{abstract}

\begin{IEEEkeywords}  
Golay complementary array pair, GCA quad, Golay number, omnidirectional precoding
\end{IEEEkeywords}

%
\IEEEpeerreviewmaketitle

\section{Introduction}
The binary Golay sequence pair with entries $\{1, -1\}$ was first introduced by Golay \cite{golay1961complementary}\cite{golay1962note}, whose respective autocorrelation functions add to be a $\delta$-function. Then Turyn improved the recursive construction proposed by Golay, to generate the binary Golay sequence pairs of length $2^a10^b26^c, a, b, c\geq 0$ \cite{turyn1974hadamard}. The above lengths are referred as the binary Golay numbers. In \cite{borwein2004complete} it was verified by exhaustive computational search that the lengths above cover all the 14 feasible numbers within 100.

To obtain more feasible lengths, the binary Golay sequence set was first considered in \cite{tseng1972complementary}, which allows for $L\ge 2$ sequences whose respective autocorrelation functions add to be a $\delta$-function. It was conjectured that the binary Golay sequence quads, i.e., for $L=4$, exist for arbitrary lengths \cite{turyn1974hadamard}, which plays an important role in construction of the Hadamard matrices \cite{goethals1970skew}. Despite of the promising existence pattern, it is quite challenging to find a construction method for arbitrary lengths. 

Another generalization is the polyphase Golay sequence pair as discussed in \cite{sivaswamy1978multiphase, frank1980polyphase, craigen2002complex}, whose entries are the $N$-th unit roots where $N\geq 2$. For the quaternary Golay sequence pair whose entries are $\{1, -1, j, -j\}$ where $j$ is the imaginary unit, the known feasible lengths revealed by recursive constructions, referred as the quaternary Golay numbers, are $2^{a+u}3^{b}5^{c}11^{d}13^{e}$ where $a, b, c, d, e, u \geq 0$, $b+c+d+e \leq a+2u+1$, $u \leq c+e$ \cite{craigen2002complex}. This existence pattern is much denser than the binary counterpart. And recently, it has been verified via exhaustive computational search that the quaternary Golay numbers cover all the $17$ feasible lengths within $28$ \cite{bright2021complex}.

A multi-dimensional generalization named the binary Golay complementary array (GCA) pair, i.e., a pair of tensors whose respective multi-dimensional autocorrelation functions add to be a multi-dimensional $\delta$-function, has been studied in \cite{luke1985sets, dymond1992barker, jedwab2007golay, fiedler2008multi, parker2011generalised}. The work above focused on both the existence pattern and the enumeration of the arrays. The known existence pattern of the binary GCA pair is similar to the one-dimensional counterpart: the array size in each dimension is a binary Golay number \cite{dymond1992barker}.

The ultimate generalization is the polyphase GCA set \cite{parker2011generalised,jiang2019autocorrelation,li2021construction,avis20103, avis2021three}. Specifically, we have constructed the polyphase Golay complementary matrix (GCM) set recently and found an application in the omnidirectional transmission by a massive multi-input multi-output (MIMO) antenna array \cite{jiang2019autocorrelation,li2021construction,girnyk2021efficient}. The feasible sizes of the quaternary GCM pairs are $2g_q^{(1)}\times g_q^{(2)}$ or $g_q^{(1)}\times 2g_q^{(2)}$ \cite{li2021construction}, where $g_q^{(1)}$ and $g_q^{(2)}$ are two quaternary Golay numbers. The feasible sizes of the quaternary GCM quads are $g_q^{(1)}\times g_q^{(2)}$ \cite{li2021construction}. Besides, a denser existence pattern of the ternary GCA triads, whose entries are $\{1, -\frac{1}{2}+\frac{\sqrt{3}}{2}j, -\frac{1}{2}-\frac{\sqrt{3}}{2}j\}$, has been discovered in \cite{avis20103}\cite{avis2021three}. Since an antenna array and an GCM shall be of the same size, it would be interesting to discover more feasible sizes of GCM to accommodate for antenna arrays of flexible sizes.

Another important application of the GCA is the precoding of orthogonal frequency division multiplexing (OFDM) signal to reduce its peak-to-average power ratio (PAPR) \cite{davis1999peak}\cite{paterson2000generalized}. However, the code rate is restricted by the total number of different Golay sequences obtained by projecting the high-dimensional arrays \cite{fiedler2008multi}. The work of enumeration of GCA of some special sizes have been extensively studied in \cite{davis1999peak,paterson2000generalized,schmidt2007complementary,fiedler2008multi,wang2021new}. 

In this paper, we focus on the existence pattern of the polyphase GCA set. By first establishing some identities over a commutative ring for recursive constructions, we construct the polyphase GCA pairs and GCA quads of more feasible sizes. Our contributions in this paper is three-fold. First, based on Lemma \ref{lem:2Identity}, we propose the construction of the quaternary GCA pairs in Theorem \ref{thm:G-N-ary} and derive their existence pattern in Corollary \ref{Qsize}. Second, based on Lemma \ref{lem:4Identity}, we construct the GCA quads in Theorem \ref{thm:quad} and specifically cover all the integers within $78\times 78$ for the feasible sizes of the binary GCM quads. Third, we propose Lemma \ref{lem:compromise}, which overcomes the cumbersome applicable conditions of Lemma \ref{lem:4Identity} for constructing the GCA quads. Based on Lemma \ref{lem:compromise}, we propose another construction of the GCA quads in Theorem \ref{thm:compromise}, and together with Theorem \ref{thm:quad}, we cover all the positive integers within $1000$ for the feasible sizes in one dimension of the quaternary GCA quads.

The remainder of this paper is organized as follows.

In Section \ref{SEC1}, we introduce three types of definitions of the GCA set, and provide an interpretation in the sight of commutative ring and its involutive automorphism. In Section \ref{SEC2}, we present four lemmas on the identities over a commutative ring, which is the cornerstone of the construction in the paper. And as an example to explain how these identities can be exploited to construct the GCA set, we provide a proof of the construction of the binary GCA pairs in \cite{dymond1992barker} using Lemma \ref{lem:2Identity}. 
In Section \ref{SEC3}, first we summarize the construction of the polyphase GCM pairs in \cite{li2021construction} and generalize it slightly in Theorem \ref{thm:N-ary}. Second, inspired by the idea in \cite{craigen2002complex}, we further propose Theorem \ref{thm:G-N-ary} as a generalization of Theorem \ref{thm:N-ary}. The feasible sizes are derived in Corollary \ref{Qsize}.
In Section \ref{SEC4}, first we summarize the construction of the polyphase GCM quads \cite{jiang2019autocorrelation} in Theorem \ref{thm:naive_cross}, which is based on the identity in Lemma \ref{lem:mnIdentity}. From this perspective, we generalize the construction slightly in Theorem \ref{thm:cross}. Second, based on the identities in Lemma \ref{lem:4Identity} and Lemma \ref{lem:compromise}, we give some new constructions of the polyphase GCA quads, generating much more feasible sizes.
The conclusions are given in Section \ref{SEC5}.

\section{Preliminaries: Polynomials of Golay Complementary Arrays and Commutative Ring} \label{SEC1}
The Golay complementary array (GCA) set has three types of equivalent definitions \cite{jedwab2007golay}\cite{parker2011generalised}. We summarize them in Definition \ref{def:GCA_1}, \ref{def:GCA_2} and \ref{def:GCA_3}, and the last two definitions based on polynomial will be adopted in this paper. Furthermore, a conception of commutative ring is introduced to depict the polynomial definition briefly.

For an $r$-dimensional complex-valued array $\Abf$ of size $s_1\times \cdots\times s_r$, its aperiodic autocorrelation function $\Rbf_A$ is defined as an array of size $(2s_1-1)\times \cdots\times (2s_r-1)$ with entries \cite{jedwab2007golay}:
\bea \label{corr}
& \Rbf_{A}\left[\delta_1, \cdots, \delta_r\right]\\
=\ &\sum_{i_1}\cdots\sum_{i_r} \Abf \left[i_1, \cdots, i_r\right] \overline{\Abf\left[i_1-\delta_1, \cdots, i_r-\delta_r\right]},
\eea
where $i_1,\cdots, i_r$ and $\delta_1, \cdots, \delta_r$ are array indices, $\Abf \left[i_1,\cdots, i_r\right] = 0$ if $i_k < 0$ or $i_k \ge s_k$ for any $k \in \left\{1, 2, \cdots, r\right\}$, and the overbar represents the complex conjugation. The array $\Abf$ is indexed from $\left[0, \cdots, 0\right]$, while the index $\left[0, \cdots, 0\right]$ corresponds to the center of the array $\Rbf_A$.

The weight of an array $\Abf$ is a nonnegative real number defined as \cite{craigen2002complex}
\ben \label{weight}
w(\Abf) = \sum_{i_1}\cdots\sum_{i_r} {\lvert \Abf \left[i_1, \cdots, i_r\right] \rvert} ^2. 
\een
\begin{Definition} \label{def:GCA_1}
A set of arrays $\{\Abf_1, \Abf_2, \cdots, \Abf_L\}$ with unimodular or zero entries is called a Golay complementary array (GCA) set if
\ben \label{ACondition}
\sum_{i=1}^{L}\Rbf_{A_i} = \sum_{i=1}^{L} w(\Abf_i) \cdot \bf{\Delta},
\een
where ${\bf{\Delta}} \in \Cnum^{(2s_1-1)\times \cdots\times (2s_r-1)}$ is an $r$-dimensional unit pulse function, i.e, the entries of $\bf{\Delta}$ are zeros that the center ${\bf{\Delta}}[0, \cdots, 0] = 1$.
\end{Definition}

When the entries are unimodular, the weight is exactly $\prod_{i=1}^r s_i$. When zero occurs in an array, we say the array is weight-deficient. When the entries are the $N$-th unit roots, we refer it as a polyphase array. Specifically, a binary array is one with entries $\{1, -1\}$ and a quaternary array is one with entries $\{1, -1, j, -j\}$ where $j=\sqrt{-1}$. An array is trivial in the $i$-th dimension if $s_i=1$, and when it's trivial in all dimensions, we say the array is trivial.

A GCA set degenerates into a Golay sequence set \cite{tseng1972complementary} or a Golay complementary matrix (GCM) set \cite{li2021construction} when $r=1$ or $r=2$ respectively. An example of a quaternary GCM pair of size $2\times 3$ is given as follows:

\ben
\Abf_1 = \begin{bmatrix}
  1&1&-1\\ -1&-j&-1
\end{bmatrix}, \quad
\Abf_2 = \begin{bmatrix}
  -1&-1&1\\ -1&-j&-1
\end{bmatrix}.
\een
And their autocorrelation functions are
\bea
\Rbf_{A_1} &= \begin{bmatrix}
  -1&-1+j&j&-1-j&1\\ 0&0&6&0&0\\ 1&-1+j&-j&-1-j&-1
\end{bmatrix}, \\
\Rbf_{A_2} &= \begin{bmatrix}
  1&1-j&-j&1+j&-1\\ 0&0&6&0&0\\ -1&1-j&j&1+j&1
\end{bmatrix},
\eea
whose summation is a two-dimensional delta-function.

The multi-variable polynomial $A(\zbf)$ of an array $\Abf$ is defined as \cite{jedwab2007golay}
\bea \label{polynomial}
& A(\zbf) := A\left(z_1, \cdots, z_r\right) \\
=\ &\sum_{i_1}\cdots\sum_{i_r} \Abf \left[i_1, \cdots, i_r\right] z_1^{i_1}\cdots z_r^{i_r},
\eea
where $z_1, \cdots, z_r$ are indeterminates. Here we use the notation $(\cdot)$ rather than $[\cdot]$ to distinguish the indeterminates from the array indices. 

In the context of signal processing, the polynomial of an array is the Z-transform of the signal. Then by the Wiener–Khinchin theorem, the Z-transform of the autocorrelation function of an array is equal to the power spectrum of the array, i.e.,
\ben \label{weiner_khinchin}
R_A(\zbf) = A(\zbf)\overline{A(\zbf^{-1})},
\een
where 
\ben \label{eqAzbar}
\overline{A(\zbf^{-1})} = \sum_{i_1}\cdots\sum_{i_r} \overline{\Abf \left[i_1, \cdots, i_r\right]} z_1^{-i_1}\cdots z_r^{-i_r}.
\een
Since the Z-transform of the $\delta$-function equals $1$, combining (\ref{ACondition}) and (\ref{weiner_khinchin}), it's straightforward to give the second definition of GCA set \cite{parker2011generalised}:
\begin{Definition} \label{def:GCA_2}
A set of arrays $\{\Abf_1, \Abf_2, \cdots, \Abf_L\}$ with unimodular or zero entries is called a GCA set if 
\ben \label{PPCondition}
\sum_{i=1}^{L}A_i(\zbf) \overline{A_i(\zbf^{-1})} = \sum_{i=1}^{L} w(\Abf_i).
\een
\end{Definition}

Nevertheless, the negative power of $z_i$ in the expansion of $\overline{A(\zbf^{-1})}$ makes it inconvenient to correspond with the Z-transform of some array. The polynomial of the flipped and conjugated version of $\Abf$ is defined as
\bea \label{eqastr}
& A^*(\zbf) := A^*\left(z_1, \cdots, z_r\right) = \sum_{i_1}\cdots\sum_{i_r} \\
&\overline{\Abf \left[s_1-1-i_1, \cdots, s_r-1-i_r\right]} z_1^{i_1}\cdots z_r^{i_r}.
\eea
where $*$ denotes rotating an array $180$ degree in each dimension and conjugating it. Comparing (\ref{eqAzbar}) and (\ref{eqastr}), we see that 
\ben \label{ppp}
A^*(\zbf) = z_1^{s_1-1}\cdots z_r^{s_r-1} \overline{A(\zbf^{-1})}.
\een
Then by (\ref{PPCondition}) and (\ref{ppp}), we have the third definition \cite{parker2011generalised}:
\begin{Definition} \label{def:GCA_3}
A set of arrays $\{\Abf_1, \Abf_2, \cdots, \Abf_L\}$ with unimodular or zero entries is called a GCA set if
\ben \label{PCondition}
\sum_{i=1}^{L}A_i(\zbf)A_i^*(\zbf) = \sum_{i=1}^{L} w(\Abf_i) z_1^{s_1-1} \cdots z_r^{s_r-1}.
\een  
\end{Definition}

A more basic interpretation of Definition \ref{def:GCA_3} is based on the concept of commutative ring and its involutive automorphism (the following introduction of these concepts, properties and their proofs can be found in any textbook of abstract algebra, e.g.,  \cite{jacobson2012basic}).

\begin{Definition} \label{commutative_ring}
  A commutative ring is a triple $(\Rnum, +, \cdot)$, where $\Rnum$ is a non-vacuous set(do not confuse it with the field of real number), $+$ and $\cdot$ are two compositions for elements in $\Rnum$, satisfying the following conditions:
  \begin{enumerate}
    \item Closeness. The set $\Rnum$ is closed under the compositions $+$ and $\cdot$, i.e., $a+b\in \Rnum$ and $a\cdot b \in \Rnum$ hold for any $a, b\in \Rnum$.
    \item Associativity. The compositions $+$ and $\cdot$ satisfy the associative law, i.e.,  for any $a, b, c\in \Rnum$, $(a+b)+c=a+(b+c)$ and $(a\cdot b)\cdot c=a\cdot(b\cdot c)$.
    \item Commutativity. The compositions $+$ and $\cdot$ satisfy the commutative law, i.e.,  for any $a, b\in \Rnum$, $a+b=b+a$ and $a\cdot b=b\cdot a$.
    \item Distributivity. The compositions $+$ and $\cdot$ satisfy the distributive law, i.e.,  for any $a, b, c\in \Rnum$, $a\cdot(b+c)=a\cdot b+a\cdot c$ and $(b+c)\cdot a=b\cdot a+c\cdot a$.
    \item Identity elements. There exist an element $0\in \Rnum$ and an element $1\in \Rnum$, such that for any $a\in \Rnum$, $a+0=0+a=a$ and $a\cdot 1=1\cdot a=a$.
    \item Invertibility. Any element in $\Rnum$ is invertible in the sense of the composition $+$, i.e., for any $a\in \Rnum$, there exists an element in $ \Rnum$ denoted by $a^{-}$, such that $a+a^{-}=a^{-}+a=0$. 
  \end{enumerate}
\end{Definition}
For ease of notation, $b+a^{-}$ is written as $b-a$ and $a\cdot b$ is written as $ab$.
\begin{property} \label{prop:inv}
For any $a,b \in {\mathbb R}$, it holds that 
\ben \label{invert}
ab^- = (ab)^- = a^-b.
\een
\end{property}
\begin{proof}
  Noting that $a0=a(0+0)=a0+a0$, addition of $(a0)^-$ gives $a0=0$. Then $ab^- + ab = a(b+b^-) =a0=0$; similarly, $ab+ab^-=0$. Thus $ab^-=(ab)^-$. Similarly, $a^-b=(ab)^-$.
\end{proof}

Polynomial ring is an example of commutative ring. Let $\Rnum = \{A(\zbf)\}$ be the set of the multi-variable polynomials defined in (\ref{polynomial}). The composition $+$ and $\cdot$ are defined as the conventional polynomial summation and multiplication respectively. The identity elements $0$ and $1$ are the integers $0$ and $1$ respectively. It's straightforward to verify that the polynomial ring defined above satisfy the six conditions in Definition \ref{commutative_ring}.

It is well-known that the multiplication of two polynomials amounts to the convolution of two arrays. Relating the compositions $+$ and $\cdot$ of a commutative ring of complex arrays to the element-wise summation and the convolution of two arrays respectively, we see that the array ring and the polynomial ring are isomorphic \cite{jacobson2012basic}. Roughly speaking, they have the same structure or they are identical. This explains the equivalence of Definition \ref{def:GCA_1} and Definition \ref{def:GCA_3}.

Another important concept is the involutive automorphism of a commutative ring. 
\begin{Definition} \label{involutive_automorphism}
  An involutive automorphism of a commutative ring $\Rnum$ is a bijective map $*$ of $\Rnum$ into itself, which map an element $a\in \Rnum$ to its image $a^* \in \Rnum$, such that for any $a, b\in \Rnum$, 
  \bea \label{p_inv}
  (a+b)^* &= a^*+b^*, \quad
  (a\cdot b)^* = a^* \cdot b^*, \\
  1^* &= 1, \qquad \qquad \quad
  (a^*)^* = a.
  \eea
\end{Definition}
\begin{property} \label{prop:involution}
For the involutive automorphism, it holds that
\ben \label{invert_involute}
(a^-)^* = (a^*)^-.
\een
\end{property}
\begin{proof}
  Noting that $0^* = (0+0)^* = 0^* + 0^*$, addition of $(0^*)^-$ gives $0 = 0^*$, then $(a^-)^* + a^* = (a^- + a)^* = 0^* = 0$. Similarly, $a^*+(a^-)^*=0$. Thus $(a^-)^* = (a^*)^-$.
\end{proof}

It's straightforward to verify that the map from $A(\zbf)$ to $A^*(\zbf)$ defined in (\ref{eqastr}) is an involutive automorphism. Thus (\ref{PCondition}) can be rewritten as 
\ben
\sum_{i=1}^{L} a_{i}a_{i}^* = x,
\een
where $\{a_1, \cdots, a_L\} \subset \Rnum$ is the desired subset, and $x = \sum_{i=1}^{L} w(\Abf_i)\prod_{i=1}^r z_i^{s_i-1}$ is a specific element in $\Rnum$.
\begin{remark}
  By the concept of commutative ring and its involutive automorphism, we treat a multi-variable polynomial as an entity instead of a tedious expansion, so that we can summarize and exploit their property briefly. This provides a basic methodology in the recursive constructions of GCAs in the paper.
\end{remark}

\section{Four Lemmas} \label{SEC2}
This section presents four lemmas on the identities over a commutative ring, which will be the cornerstones of the construction of the polyphase GCA set developed later in the paper. 

The following Lemma \ref{lem:2Identity} is essentially a restatement of  \cite[Theorem 1]{parker2011generalised} in the framework of commutative ring. 

 \begin{lemma}\label{lem:2Identity}
    Given a commutative ring $\Rnum$ with an involutive automorphism $*$, and $a, b, c, d \in \Rnum$, suppose
    \ben \label{quaternion}
    e = ac+bd, \quad f = b^*c-a^*d,
    \een
    then
    \ben \label{norm}
    ee^* + ff^* = \left(aa^*+bb^*\right)\left(cc^*+dd^*\right).
    \een
  \end{lemma}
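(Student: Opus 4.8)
The plan is to prove the identity \eqref{norm} by direct expansion, using nothing beyond the defining properties of the involutive automorphism in Definition \ref{involutive_automorphism}, Property \ref{prop:inv}, Property \ref{prop:involution}, and the commutativity of $\Rnum$. Structurally this is the ring-theoretic analogue of the multiplicativity of the quaternion norm (equivalently, a Brahmagupta--Fibonacci/Lagrange-type identity), so the real content is that the mixed terms produced by the two products cancel exactly, leaving precisely the factored right-hand side.

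First I would compute the images $e^*$ and $f^*$. Since $e = ac + bd$, the automorphism laws $(x+y)^* = x^*+y^*$ and $(xy)^* = x^*y^*$ give $e^* = a^*c^* + b^*d^*$ immediately. The only delicate step is $f^*$, because $f = b^*c - a^*d$ involves an additive inverse. I would write $f = b^*c + (a^*d)^-$ and apply Property \ref{prop:involution}, i.e.\ $(x^-)^* = (x^*)^-$, together with the involution identity $(x^*)^* = x$ from Definition \ref{involutive_automorphism}, to obtain $f^* = bc^* - ad^*$.

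Next I would expand the two products, moving the minus signs through Property \ref{prop:inv} and collecting factors by commutativity, to get
\[
ee^* = aa^*cc^* + ab^*cd^* + a^*bc^*d + bb^*dd^*,
\]
\[
ff^* = bb^*cc^* - ab^*cd^* - a^*bc^*d + aa^*dd^*.
\]
Adding these, the mixed terms $\pm ab^*cd^*$ and $\pm a^*bc^*d$ cancel in pairs via invertibility, leaving $ee^* + ff^* = aa^*cc^* + aa^*dd^* + bb^*cc^* + bb^*dd^*$, which is exactly the expansion of $(aa^*+bb^*)(cc^*+dd^*)$, completing the proof. The main obstacle here is not computational difficulty but careful bookkeeping of additive inverses: in a general commutative ring the step producing $f^*$ and the cancellation of the cross terms must be justified by Property \ref{prop:inv} and Property \ref{prop:involution} rather than by ordinary sign arithmetic, and once $f^*$ is formed correctly the cancellation is automatic.
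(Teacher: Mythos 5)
Your proof is correct and follows essentially the same route as the paper's: compute $e^*=a^*c^*+b^*d^*$ and $f^*=bc^*-ad^*$ via the automorphism laws and Property \ref{prop:involution}, expand both products by distributivity, and observe that the cross terms $ab^*cd^*$ and $a^*bc^*d$ cancel by Property \ref{prop:inv}, leaving the factored right-hand side. No gaps.
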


  \begin{proof}
    By the property of a commutative ring with an involutive automorphism, we have 
    \bea
    &ee^*+ff^* \\
    =\ &(ac+bd)(ac+bd)^*+(b^*c-a^*d)(b^*c-a^*d)^*\\
    \overset{(\Acal)}{=}& (ac+bd)(a^*c^*+b^*d^*)+(b^*c-a^*d)(bc^*-ad^*)\\
    \overset{(\Bcal)}{=}& ac(a^*c^*)+ac(b^*d^*)+bd(a^*c^*)+bd(b^*d^*)+\\
    &b^*c(bc^*)+(b^*c)(ad^*)^- + (a^*d)^-(bc^*) + (a^*d)^-(ad^*)^-\\
    \overset{(\Ccal)}{=}& aa^*cc^* + bb^*cc^* + aa^*dd^* + bb^*dd^* + \\
    &ab^*cd^* + a^*bc^*d - ab^*cd^* - a^*bc^*d \\
    =& \left(aa^*+bb^*\right)\left(cc^*+dd^*\right)
    \eea
    where $\overset{(\Acal)}{=}$ follows from Definition \ref{involutive_automorphism} and Property \ref{prop:involution}, $\overset{(\Bcal)}{=}$ follows from the distributivity, $\overset{(\Ccal)}{=}$ follows from the commutativity, the associativity and Property \ref{prop:inv}.
  \end{proof}
  
  A matrix form of Lemma \ref{lem:2Identity} has been employed in \cite{eliahou1991golay,parker2011generalised,budivsin2014paraunitary,budivsin2018paraunitary} to construct and enumerate the Golay sequence pairs and the Golay array pairs.
  
 The following  Lemma \ref{lem:4Identity} is a reproduction of the so-called Lagrange identity \cite[Theorem L]{yang1989composition}.
  \begin{lemma} \label{lem:4Identity}
  Given a commutative ring $\Rnum$ with an involutive automorphism *, and $a, b, c, d, e, f, g, h \in \Rnum$, suppose
    \bea \label{4Identity}
    p &= af^*-b^*e+cg+dh,\\
    q &= a^*e+bf^*-ch^*+dg^*,\\
    r &= c^*e-df+ah^*+bg,\\
    s &= -cf-d^*e+ag^*-bh,
    \eea
    then
    \bea \label{4Lagrange}
    &pp^*+qq^*+rr^*+ss^*\\
    =\ & \left(aa^*+bb^*+cc^*+dd^*\right)\left(ee^*+ff^*+gg^*+hh^*\right).
    \eea
\end{lemma}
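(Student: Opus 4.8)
The plan is to establish the identity by direct expansion of the left-hand side, in exactly the spirit of the proof of Lemma \ref{lem:2Identity}: reduce both sides to sums of degree-four monomials in $a,\dots,h$ and their images under $*$, and check that the two sides agree monomial by monomial. Since each of $p,q,r,s$ is a sum of four terms, each product $pp^*,qq^*,rr^*,ss^*$ expands into $16$ monomials, so the left-hand side is a sum of $64$ monomials. The right-hand side, being a product of two four-term factors, is a sum of $16$ ``diagonal'' monomials of the form $xx^*yy^*$ with $x\in\{a,b,c,d\}$ and $y\in\{e,f,g,h\}$. The content of the lemma is therefore that, among the $64$ terms on the left, exactly $16$ reproduce these diagonal terms while the remaining $48$ off-diagonal terms cancel.

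First I would compute the four images $p^*,q^*,r^*,s^*$. Applying the homomorphism properties of Definition \ref{involutive_automorphism} together with $(a^-)^*=(a^*)^-$ from Property \ref{prop:involution} and $(x^*)^*=x$, one obtains for instance $p^*=a^*f-be^*+c^*g^*+d^*h^*$, and analogously for $q^*,r^*,s^*$; the only subtlety is carrying the minus signs through $*$, which Property \ref{prop:involution} legitimizes. Next, invoking distributivity I would expand each of $pp^*,qq^*,rr^*,ss^*$ into its $16$ summands, at this stage keeping every factor in the order produced by the multiplication so that no commutativity is used yet.

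The decisive and most laborious step is the collection of like terms. Here I would use commutativity and associativity to put every monomial into a canonical form (say, alphabetical order in the $\{a,b,c,d\}$-block times alphabetical order in the $\{e,f,g,h\}$-block), and Property \ref{prop:inv}, namely $x^- y=(xy)^-=x y^-$, to normalize all the signs. The $16$ diagonal monomials then assemble into $(aa^*+bb^*+cc^*+dd^*)(ee^*+ff^*+gg^*+hh^*)$, exactly as the four diagonal contributions to $aa^*$ already illustrate: $pp^*$ supplies $aa^*ff^*$, $qq^*$ supplies $aa^*ee^*$, $rr^*$ supplies $aa^*hh^*$, and $ss^*$ supplies $aa^*gg^*$, which sum to $aa^*(ee^*+ff^*+gg^*+hh^*)$. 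The $48$ off-diagonal monomials, those in which either the $\{a,b,c,d\}$-factor or the $\{e,f,g,h\}$-factor mixes two distinct letters, must then be shown to vanish. I would organize them by monomial type (for example all terms proportional to $ab^*eg^*$, all terms proportional to $a^*c ef^*$, and so on) and verify that the contributions to each type cancel; commutativity of $\Rnum$ is precisely what makes the would-be partners coincide up to sign.

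The main obstacle is not conceptual but combinatorial: the identity is the four-square (Lagrange/Euler) composition formula, whose cross-term cancellation is notoriously sign-sensitive, so the real work is a disciplined bookkeeping of the $48$ off-diagonal monomials and their signs. A tempting shortcut is to derive the four-term identity from two applications of Lemma \ref{lem:2Identity} via a Cayley--Dickson-type pairing of $(a,b,c,d)$ and $(e,f,g,h)$ into \emph{quaternions}; however, in the present \emph{commutative} ring the quaternion structure is not available intrinsically, and the naive grouping $(aa^*+bb^*)+(cc^*+dd^*)$ only yields a sum of eight norms rather than four. I therefore expect the clean route to be the direct verification above, with the specific definitions of $p,q,r,s$ chosen precisely so that the off-diagonal terms pair off and cancel.
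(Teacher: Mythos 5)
Your direct-expansion plan is exactly the paper's approach: the paper offers no written-out proof, stating only that the verification of Lemma \ref{lem:4Identity} is ``more laborious than Lemma \ref{lem:2Identity} but is straightforward'' (citing it as the Lagrange identity), and your organization of the $64$ terms into $16$ diagonal monomials reproducing the right-hand side plus $48$ off-diagonal monomials cancelling in conjugate pairs is precisely that routine verification, which does go through as you describe. Your side remark that the Cayley--Dickson shortcut is not intrinsically available also matches the paper, which invokes the octonion interpretation only as an after-the-fact explanation rather than as the proof.
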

The verification of  Lemma \ref{lem:4Identity} is more laborious than Lemma \ref{lem:2Identity} but is straightforward.
\begin{remark}
According to \cite{baez2002octonions}, the pair $(e, f)$ in Lemma \ref{lem:2Identity} and the quad $(p, q, r, s)$ can be viewed as the quaternion and the octonion respectively, which are normed division algebras constructed by the well-known Cayley-Dickson process. This explains more elegantly why the identities hold. Unfortunately, there does not exist a Lagrange identity of eight components since the real numbers, complex numbers, quaternions and octonions are the only normed division algebras \cite[Theorem 1]{baez2002octonions}. 
\end{remark}

  \begin{lemma} \label{lem:mnIdentity}
  Given a commutative ring $\Rnum$ with an involutive automorphism *, and $a_1, \cdots a_m, b_1, \cdots, b_n \in \Rnum$, suppose
  \ben \label{eq:cross}
  c_{ij} = a_i b_j, \quad 1\leq i\leq m, 1\leq j\leq n,
  \een
  then
  \ben \label{cross_item}
  \sum_{i=1}^{m}\sum_{j=1}^{n} c_{ij}c_{ij}^{*} = \left(\sum_{i=1}^{m} a_i a_i^*\right) \left(\sum_{j=1}^{n} b_j b_j^*\right).
  \een
\end{lemma}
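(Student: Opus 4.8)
The plan is to reduce the double sum directly to a product by first simplifying a single summand $c_{ij}c_{ij}^*$ and then factoring. This is the most elementary of the four identities, because each $c_{ij}=a_ib_j$ is a pure product with no subtraction, so no cross terms ever arise and Property \ref{prop:inv} is not even needed.

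First I would apply the involutive automorphism to $c_{ij}=a_ib_j$. By the multiplicativity $(a\cdot b)^*=a^*\cdot b^*$ from Definition \ref{involutive_automorphism}, we obtain $c_{ij}^*=a_i^*b_j^*$, and hence $c_{ij}c_{ij}^*=a_ib_ja_i^*b_j^*$. Next I would invoke the associativity and commutativity of the ring multiplication to regroup the four factors, giving $c_{ij}c_{ij}^*=(a_ia_i^*)(b_jb_j^*)$.

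Finally I would factor the double sum by applying distributivity twice. Summing over $j$ with $i$ held fixed pulls out $a_ia_i^*$, so that $\sum_{j=1}^{n}(a_ia_i^*)(b_jb_j^*)=(a_ia_i^*)\sum_{j=1}^{n}b_jb_j^*$; then summing over $i$ pulls out the common factor $\sum_{j=1}^{n}b_jb_j^*$, yielding $\left(\sum_{i=1}^{m}a_ia_i^*\right)\left(\sum_{j=1}^{n}b_jb_j^*\right)$, which is the claimed right-hand side.

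There is essentially no obstacle here: every step is a direct invocation of a single ring axiom or of the defining property of $*$. Unlike Lemma \ref{lem:2Identity} and Lemma \ref{lem:4Identity}, where the real content lies in the cancellation of cross terms through Property \ref{prop:inv}, the present identity holds term by term, reflecting the fact that it is merely the fully separable special case of the quaternion and octonion constructions with all off-diagonal interactions removed.
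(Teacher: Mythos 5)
Your proof is correct and is precisely the ``straightforward'' argument the paper alludes to without writing out (the paper omits an explicit proof of this lemma, remarking only that the ring-theoretic viewpoint makes it immediate): apply multiplicativity of $*$ to get $c_{ij}^*=a_i^*b_j^*$, regroup by commutativity and associativity to obtain $(a_ia_i^*)(b_jb_j^*)$, and factor the double sum by distributivity. Your closing observation that no cancellation of cross terms is needed, in contrast to Lemmas \ref{lem:2Identity} and \ref{lem:4Identity}, accurately reflects why the paper treats this case as trivial.
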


Lemma \ref{lem:mnIdentity} is a more abstract version of \cite[Lemma \MakeUppercase{\romannumeral 3}.3]{jiang2019autocorrelation}. With the notion of commutative ring, the proof of the lemma is much more straightforward than that of \cite[Lemma \MakeUppercase{\romannumeral 3}.3]{jiang2019autocorrelation}. 

Combining Lemma \ref{lem:2Identity} with Lemma \ref{lem:mnIdentity}, we propose the following identity. 
\begin{lemma} \label{lem:compromise}
  Given a commutative ring $\Rnum$ with an involutive automorphism *, and $a_1,\cdots a_m, b_1, \cdots, b_{n} \in \Rnum$, $m, n$ are even numbers. For $1\leq i\leq \frac{m}{2}, 1\leq j\leq \frac{n}{2}$, suppose
  \bea \label{compromise}
  c_{ij} &= a_{2i-1} b_{2j-1} + a_{2i}b_{2j},\\
  d_{ij} &= a_{2i-1} b_{2j}^* - a_{2i}b_{2j-1}^*,
  \eea
  then
  \ben \label{eq.lem4}
  \sum_{i=1}^{\frac{1}{2}m}\sum_{j=1}^{\frac{1}{2}n} \left(c_{ij}c_{ij}^{*} + d_{ij}d_{ij}^{*}\right) = \left(\sum_{i=1}^{m} a_i a_i^*\right) \left(\sum_{j=1}^{n} b_j b_j^*\right).
  \een
\end{lemma}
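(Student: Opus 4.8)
The plan is to reduce the statement to a termwise application of Lemma \ref{lem:2Identity}, one invocation for each index pair $(i,j)$, followed by a factorization of the resulting double sum. First I would fix $i$ and $j$ and specialize the four variables of Lemma \ref{lem:2Identity} to $a=a_{2i-1}$, $b=a_{2i}$, $c=b_{2j-1}$, $d=b_{2j}$. With this identification the first output of that lemma is $e=ac+bd=a_{2i-1}b_{2j-1}+a_{2i}b_{2j}=c_{ij}$, so the $c_{ij}$-term is already in place. The second output is $f=b^*c-a^*d=a_{2i}^*b_{2j-1}-a_{2i-1}^*b_{2j}$, which does \emph{not} literally coincide with $d_{ij}$, and reconciling the two is the crux of the argument.

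The key observation is that $d_{ij}=-f^*$. Applying the involution to $f$ and using the additivity and multiplicativity in Definition \ref{involutive_automorphism} together with Property \ref{prop:involution} (to move the involution past the subtraction), one computes $f^*=a_{2i}b_{2j-1}^*-a_{2i-1}b_{2j}^*=-d_{ij}$. Applying the involution once more gives $d_{ij}^*=-f$, and by commutativity it follows that $d_{ij}d_{ij}^*=(-f^*)(-f)=ff^*$. Hence the per-pair contribution is exactly the left-hand side of the Lemma \ref{lem:2Identity} identity, so
\ben
c_{ij}c_{ij}^*+d_{ij}d_{ij}^* = ee^*+ff^* = \left(a_{2i-1}a_{2i-1}^*+a_{2i}a_{2i}^*\right)\left(b_{2j-1}b_{2j-1}^*+b_{2j}b_{2j}^*\right).
\een

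It then remains to sum over $i$ and $j$ and factor. Writing $\alpha_i=a_{2i-1}a_{2i-1}^*+a_{2i}a_{2i}^*$ and $\beta_j=b_{2j-1}b_{2j-1}^*+b_{2j}b_{2j}^*$, the display above says each summand equals $\alpha_i\beta_j$, so by distributivity (equivalently, the degenerate product structure of Lemma \ref{lem:mnIdentity}) the double sum factors as $\left(\sum_{i=1}^{m/2}\alpha_i\right)\left(\sum_{j=1}^{n/2}\beta_j\right)$. Finally, because $m$ and $n$ are even, the odd/even split is exact and the pairwise sums reindex to the full sums, $\sum_{i=1}^{m/2}\alpha_i=\sum_{k=1}^{m}a_ka_k^*$ and likewise for $\beta$, yielding precisely the right-hand side of (\ref{eq.lem4}).

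The only genuine obstacle is the sign-and-involution matching of the second paragraph: the defined $d_{ij}$ is not identical to the $f$ produced by Lemma \ref{lem:2Identity}, and one must verify that passing to $-f^*$ leaves the quantity $d_{ij}d_{ij}^*$ unchanged, which rests on Property \ref{prop:involution} and commutativity. Once that bookkeeping is settled, the remainder is a direct invocation of Lemma \ref{lem:2Identity} and elementary distributivity.
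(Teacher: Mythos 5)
Your proof is correct and takes essentially the same route as the paper, namely a termwise application of Lemma \ref{lem:2Identity} followed by summing and factoring by distributivity. The only difference is cosmetic: the sign-and-involution bookkeeping in your second paragraph can be avoided entirely by instead specializing Lemma \ref{lem:2Identity} with $a=b_{2j-1}$, $b=b_{2j}$, $c=a_{2i-1}$, $d=a_{2i}$, which yields $e=c_{ij}$ and $f=d_{ij}$ exactly, and then invoking commutativity of the product on the right-hand side.
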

\proof 
According to Lemma \ref{lem:2Identity}, 
\ben 
c_{ij}c_{ij}^{*} + d_{ij}d_{ij}^{*} = (a_{2i-1} a_{2i-1}^* + a_{2i}a_{2i}^*)(b_{2j-1} b_{2j-1}^* + b_{2j}b_{2j}^*), \nonumber
\een 
from which (\ref{eq.lem4}) follows.
\endproof 
To illustrate how the identities established in the above four lemmas can be employed in the construction of the GCA sets, we provide a new and more succinct proof of the well-known construction of the binary GCA pair \cite{dymond1992barker} using Lemma \ref{lem:2Identity}.
  \begin{theorem}[\cite{dymond1992barker}] \label{thm:binary}
    Given a binary GCA pair $\{\Abf, \Bbf\}$ of size $s_1\times s_2\times \cdots\times s_r$, and another one $\{\Cbf, \Dbf\}$ of size $t_1\times t_2\times \cdots\times t_r$, then a binary GCA pair $\{\Ebf, \Fbf\}$ of size $s_1t_1\times s_2t_2\times \cdots\times s_rt_r$ can be constructed as
    \bea \label{dymond}
    \Ebf &= \f{1}{2} \left[\Abf \otimes \left(\Cbf+\Dbf\right)+\Bbf \otimes \left(\Cbf-\Dbf\right)\right],\\
    \Fbf &= \f{1}{2} \left[\Bbf^* \otimes \left(\Cbf+\Dbf\right)-\Abf^* \otimes \left(\Cbf-\Dbf\right)\right]
    \eea
    where $*$ denotes flipping and conjugating an array in all dimensions, $\otimes$ denotes the Kronecker product.
  \end{theorem}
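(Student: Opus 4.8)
The plan is to pass to the polynomial picture and apply Lemma~\ref{lem:2Identity} in the ring of Laurent polynomials $\Rnum=\Cnum[z_1^{\pm1},\dots,z_r^{\pm1}]$ equipped with the size-independent involution $P^{\dagger}:=\overline{P(\zbf^{-1})}$; one checks at once that $\dagger$ is an involutive automorphism, so by Definition~\ref{def:GCA_2} a binary pair $\{\Abf,\Bbf\}$ of size $s_1\times\cdots\times s_r$ is a GCA pair exactly when $AA^{\dagger}+BB^{\dagger}=w(\Abf)+w(\Bbf)=2S$, where $S=\prod_k s_k$ and $A,B$ are the polynomials of $\Abf,\Bbf$. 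The one translation needed is that the Kronecker product is a ``stretch-and-multiply'': writing $\widehat A(\zbf):=A(z_1^{t_1},\dots,z_r^{t_r})$, the polynomial of $\Abf\otimes\Cbf$ is $\widehat A(\zbf)\,C(\zbf)$. Three elementary facts about the stretch will be used repeatedly: it is a ring homomorphism, $\widehat{PQ}=\widehat P\,\widehat Q$; it fixes constants; and it commutes with the involution, $\widehat{P^{\dagger}}=\widehat P{}^{\dagger}$.

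First I would verify that $\Ebf$ and $\Fbf$ are genuinely binary. At every index $\Cbf$ and $\Dbf$ are $\pm1$, so exactly one of $(\Cbf+\Dbf)$, $(\Cbf-\Dbf)$ equals $\pm2$ there while the other vanishes. Hence in the entry $\tfrac12[\Abf[i](\Cbf+\Dbf)[j]+\Bbf[i](\Cbf-\Dbf)[j]]$ of $\Ebf$ exactly one summand survives and contributes $\pm\Abf[i]$ or $\pm\Bbf[i]$, giving an entry in $\{1,-1\}$; the same argument applies to $\Fbf$ (note $\Bbf^{*},\Abf^{*}$ are again binary).

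Next, for the complementary condition I would instantiate Lemma~\ref{lem:2Identity} (with its involution taken to be $\dagger$) at $a=\widehat A$, $b=\widehat B$, $c=C+D$, $d=C-D$. Then $ac+bd=\widehat A(C+D)+\widehat B(C-D)=2E$, while $b^{\dagger}c-a^{\dagger}d=\widehat B{}^{\dagger}(C+D)-\widehat A{}^{\dagger}(C-D)$ equals $2F$ up to the monomial $\mu=\prod_k z_k^{t_k(s_k-1)}$ produced when the array flip-conjugates $\Bbf^{*},\Abf^{*}$ in \eqref{dymond} are rewritten through \eqref{ppp} and stretched. Because every monomial satisfies $\mu\mu^{\dagger}=1$, this factor cancels in $ff^{\dagger}=4FF^{\dagger}$, and the lemma gives
\[
4EE^{\dagger}+4FF^{\dagger}=\bigl(\widehat A\,\widehat A{}^{\dagger}+\widehat B\,\widehat B{}^{\dagger}\bigr)\bigl((C+D)(C+D)^{\dagger}+(C-D)(C-D)^{\dagger}\bigr).
\]
Here the second factor reduces to $2(CC^{\dagger}+DD^{\dagger})=4T$ by additivity and commutativity of $\dagger$ together with the GCA property of $\{\Cbf,\Dbf\}$ (with $T=\prod_k t_k$), while the first equals $\widehat{AA^{\dagger}+BB^{\dagger}}=\widehat{2S}=2S$ by the three stretch facts and the GCA property of $\{\Abf,\Bbf\}$. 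Dividing by $4$ yields $EE^{\dagger}+FF^{\dagger}=2ST=w(\Ebf)+w(\Fbf)$, which is precisely the GCA condition of Definition~\ref{def:GCA_2}.

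The only genuine obstacle is keeping the two involutions straight: the flip-conjugate $*$ appearing in the statement is size-dependent, so substituting it directly into Lemma~\ref{lem:2Identity} would entangle the three sizes $s$, $t$ and $st$. Carrying out the whole computation with the single size-independent involution $\dagger$, and isolating the leftover flip-conjugate monomials as $\dagger$-units, is what makes the reduction to Lemma~\ref{lem:2Identity} clean; the remaining manipulations are exactly the expansion already performed in the proof of that lemma.
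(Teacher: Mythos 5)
Your proof is correct and follows essentially the same route as the paper's: both apply Lemma~\ref{lem:2Identity} with $a,b$ the $\tbf$-stretched polynomials of $\Abf,\Bbf$ and $c,d$ proportional to $C\pm D$, and both obtain binarity from the disjointness of $\frac{1}{2}(\Cbf+\Dbf)$ and $\frac{1}{2}(\Cbf-\Dbf)$. The only difference is technical bookkeeping: you phrase complementarity via Definition~\ref{def:GCA_2} and the size-independent involution $\overline{P(\zbf^{-1})}$ on Laurent polynomials, absorbing the flip-conjugate monomials as units with $\mu\mu^{\dagger}=1$, whereas the paper works directly with the flip-conjugate $*$ of Definition~\ref{def:GCA_3}; your version is in fact slightly more careful, since $*$ is only additive on polynomials of equal multidegree.
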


  To prove Theorem \ref{thm:binary},  we first establish the connection between the Kronecker product and the polynomial multiplication, and then assign specific polynomials to $a, b, c, d$ in Lemma \ref{lem:2Identity}.
  \begin{Proposition} \label{prop:kron}
    Given an array $\Abf$ of size  $s_1\times \cdots\times s_r$, and another array $\Bbf$ of size $t_1\times \cdots\times t_r$, suppose $\Gbf = \Abf \otimes \Bbf$, then $G(\zbf) = A(\zbf^{\tbf})B(\zbf)$ where $A(\zbf^\tbf)$ is an abbreviation of $A(z_1^{t_1}, \cdots, z_r^{t_r})$.
  \end{Proposition}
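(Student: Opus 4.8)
The plan is to unwind both sides through the polynomial definition (\ref{polynomial}) and show they agree term by term. The essential ingredient is the index bookkeeping of the multi-dimensional Kronecker product: if $\Gbf = \Abf \otimes \Bbf$, then an index $k_\ell$ of $\Gbf$ in the $\ell$-th dimension admits a unique decomposition $k_\ell = i_\ell t_\ell + j_\ell$ with $0 \le i_\ell < s_\ell$ and $0 \le j_\ell < t_\ell$, and the corresponding entry factors as $\Gbf[k_1,\ldots,k_r] = \Abf[i_1,\ldots,i_r]\,\Bbf[j_1,\ldots,j_r]$. This is precisely the defining property of the Kronecker product, generalized from matrices to $r$-dimensional arrays by applying the same block structure independently in each coordinate.

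First I would substitute this factorization into $G(\zbf) = \sum_{k_1}\cdots\sum_{k_r}\Gbf[k_1,\ldots,k_r] z_1^{k_1}\cdots z_r^{k_r}$, replacing each summation over $k_\ell$ by the pair of summations over $i_\ell$ and $j_\ell$, and each monomial $z_\ell^{k_\ell}$ by $z_\ell^{i_\ell t_\ell + j_\ell} = (z_\ell^{t_\ell})^{i_\ell} z_\ell^{j_\ell}$. Because the summand then splits as a product of an $\Abf$-factor depending only on the $i_\ell$'s and a $\Bbf$-factor depending only on the $j_\ell$'s, while the index ranges are independent, the whole sum factors as
\[
\left(\sum_{i_1}\cdots\sum_{i_r}\Abf[i_1,\ldots,i_r](z_1^{t_1})^{i_1}\cdots(z_r^{t_r})^{i_r}\right)\left(\sum_{j_1}\cdots\sum_{j_r}\Bbf[j_1,\ldots,j_r]z_1^{j_1}\cdots z_r^{j_r}\right).
\]
By the definition (\ref{polynomial}), the first factor is exactly $A(z_1^{t_1},\ldots,z_r^{t_r}) = A(\zbf^\tbf)$ and the second is $B(\zbf)$, which is the claimed identity.

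The only genuine obstacle is to fix the multi-dimensional Kronecker product unambiguously, since the one-dimensional convention $k = it + j$ must be applied consistently in every coordinate; once that index map is declared, the remaining computation is a routine interchange and regrouping of finite sums, using nothing beyond the ordinary commutativity of polynomial multiplication. It is worth noting that it is $B(\zbf)$, rather than $A(\zbf)$, that retains the unshifted exponents, because $\Bbf$ occupies the ``inner'' blocks in the $it+j$ ordering; had the roles of $\Abf$ and $\Bbf$ been swapped, the same argument would instead yield $B(\zbf^\sbf)A(\zbf)$.
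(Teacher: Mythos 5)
Your proposal is correct and follows essentially the same route as the paper's proof: both start from the Kronecker index decomposition $k_\ell = i_\ell t_\ell + j_\ell$ with $\Gbf[k_1,\ldots,k_r]=\Abf[i_1,\ldots,i_r]\Bbf[j_1,\ldots,j_r]$, substitute into the polynomial definition, and factor the resulting sum into $A(\zbf^{\tbf})B(\zbf)$. Your added remarks about fixing the index convention and about the asymmetry between $\Abf$ and $\Bbf$ are accurate but not needed beyond what the paper already does.
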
 
  \begin{proof}
    By the property of Kronecker product, we have
    \ben
    \Gbf\left[k_1,\cdots, k_r\right] = \Abf\left[i_1, \cdots, i_r\right] \Bbf\left[j_1, \cdots, j_r\right]
    \een
    where $k_l = i_l t_l + j_l$ for any $l\in\{1,2, \cdots, r\}$, then 
    \bea
    &G(\zbf) \\
    =\ & \sum_{i_1, j_1}\cdots \sum_{i_r, j_r} \Abf\left[i_1, \cdots, i_r\right] \Bbf\left[j_1, \cdots, j_r\right] \prod_{l=1}^r z_l^{i_l t_l + j_l}\\
    =\ & \sum_{i_1}\cdots \sum_{i_r} \Abf\left[i_1, \cdots, i_r\right] \prod_{l=1}^r z_l^{i_l t_l} \\
    & \sum_{j_1}\cdots \sum_{j_r}\Bbf\left[j_1, \cdots, j_r\right] \prod_{l=1}^r z_l^{j_l}\\
    =\ & A(\zbf^{\tbf})B(\zbf).
    \eea
  \end{proof}

  The multiplication of two polynomials amounts to the convolution of two arrays, and the convolution may typically ruin the polyphase property. But Proposition \ref{prop:kron} can maintain the polyphase property of the array by utilizing the sparse polynomial $A(\zbf^{\tbf})$ to avoid the summation of the coefficients \textemdash this explains why the Kronecker product is necessary.
   
    \begin{proof}[Proof of Theorem \ref{thm:binary}] For notational simplicity, denote $a =A(\zbf^{\tbf})$, $b=B(\zbf^{\tbf})$,  $c^{\prime}=C(\zbf)$ and $d^{\prime}=D(\zbf)$, where $A(\zbf)$, $B(\zbf)$, $C(\zbf)$, and $D(\zbf)$ as the polynomials of $\Abf, \Bbf, \Cbf,$ and $\Dbf$, respectively. 
    By Definition \ref{def:GCA_3}, we have 
   \ben
   aa^*+bb^* = 2\prod_{i=1}^r s_i z_i^{t_i(s_i-1)}, \ 
   c^{\prime}c^{\prime *}+d^{\prime}d^{\prime *} = 2\prod_{i=1}^r t_i z_i^{t_i-1}.
   \een
   Then denote $c = \frac{1}{2} (c^{\prime}+d^{\prime})$, $d = \frac{1}{2} (c^{\prime}-d^{\prime})$, $e = E(\zbf)$, $f = F(\zbf)$. 
   By Proposition \ref{prop:kron}, it follows from (\ref{dymond})  that 
    \ben
    e = ac+bd, \quad f = b^*c-a^*d.
    \een
     Hence, by Lemma \ref{lem:2Identity} we have
  \bea
  &ee^*+ff^* \\
  =\ &(aa^*+bb^*)(cc^*+dd^*) \\
  =\ &(aa^*+bb^*)\left[\frac{1}{4}(c^{\prime}+d^{\prime})(c^{\prime}+d^{\prime})^{*}+\frac{1}{4}(c^{\prime}-d^{\prime})(c^{\prime}-d^{\prime})^{*}\right] \\
  =\ &\frac{1}{2} (aa^*+bb^*)(c^{\prime}c^{\prime *}+d^{\prime}d^{\prime *}) \\
  =\ &2\prod_{i=1}^{r} s_i t_i z_i^{s_it_i-1},
  \eea
  which satisfies the condition of autocorrelation complementarity by  (\ref{PCondition}). Since $\Cbf$ and $\Dbf$ are binary, $\frac{1}{2} \left(\Cbf+\Dbf\right)$ and $\frac{1}{2} \left(\Cbf-\Dbf\right)$ are disjoint, i.e., for any element of $\frac{1}{2} \left(\Cbf+\Dbf\right)$ being binary, the corresponding element of  $\frac{1}{2} \left(\Cbf-\Dbf\right)$ is $0$,  and vice versa. Therefore, the entries of $\Ebf$ and $\Fbf$ are also binary.
  \end{proof} 
  
  \begin{remark}
    The new proof of Theorem \ref{thm:binary} illustrates the benefits of introducing the perspective of commutative ring. Indeed, all the constructions in the paper are based on the four lemmas established from the perspective. The remainder is to maintain the polyphase property: the risk of ruining the polyphase property introduced by the composition $\cdot$ is hedged by the Kronecker product, while the risk introduced by the composition $+$ shall be avoided by introducing zeros in the proper positions.
  \end{remark}

\section{Constructions of Polyphase GCA Pairs} \label{SEC3}
  The seed sequences fed into Theorem \ref{thm:binary} are discovered by computational search \cite{golay1961complementary}. The lengths are $2$, $10$ and $26$, referred as basic binary Golay number (BBGN). Then the size in each dimension of the binary GCA pair constructed by Theorem \ref{thm:binary} is a binary Golay number, i.e., a number of form \ben \{ 2^{a}10^{b}26^{c}| a,b,c \in {\mathbb Z}^+ \}, \label{eqbinGolayLen}\een 
  where $a, b, c$ are non-negative integers.
  Seeking for a denser existence pattern, in our previous work \cite{li2021construction}, we constructed the quaternary GCM pairs based on two quaternary Golay sequence pairs.
  \begin{theorem}[\cite{li2021construction}] \label{thm:GCM_v1}
    Given $\{\abf, \bbf\}$ a pair of polyphase Golay sequences of length $L$, $\{\cbf, \dbf\}$ a pair of polyphase Golay sequences of length $M$, the two matrices
    \ben
    \Abf = \begin{bmatrix}
      \abf\cbf^T\\ \bbf\dbf^T
    \end{bmatrix}, \quad
    \Bbf = \begin{bmatrix}
      -\abf\dbf^{*T}\\ \bbf\cbf^{*T}
    \end{bmatrix}, 
    \een
    constitute a polyphase GCM pair of size $2L\times M$, where $[\cdot]^T$ denotes transposing a matrix.
  \end{theorem}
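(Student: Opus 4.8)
The plan is to work in the polynomial picture of Definition \ref{def:GCA_3} and reduce the whole statement to the two-square identity of Lemma \ref{lem:2Identity}, mirroring the proof of Theorem \ref{thm:binary}. Let $A(z_1),B(z_1)$ be the (single-variable) polynomials of $\abf,\bbf$ and $C(z_2),D(z_2)$ those of $\cbf,\dbf$; by (\ref{PCondition}) the Golay hypotheses become the two one-dimensional identities $AA^*+BB^*=2L\,z_1^{L-1}$ and $CC^*+DD^*=2M\,z_2^{M-1}$. The first task is to read off the bivariate polynomials of the two stacked matrices. An outer product $\abf\cbf^{T}$ has polynomial $A(z_1)C(z_2)$ (a special case of Proposition \ref{prop:kron}), placing the block $\bbf\dbf^{T}$ beneath it offsets its row index by $L$, i.e. multiplies its polynomial by $z_1^{L}$, and the flipped-conjugated factors $\dbf^{*T},\cbf^{*T}$ contribute $D^*(z_2),C^*(z_2)$. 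Hence
\[\Abf(\zbf)=A(z_1)C(z_2)+z_1^{L}B(z_1)D(z_2),\quad \Bbf(\zbf)=-A(z_1)D^*(z_2)+z_1^{L}B(z_1)C^*(z_2).\]

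Next I would feed these into Lemma \ref{lem:2Identity} with the assignment $a=A(z_1)$, $b=z_1^{L}B(z_1)$, $c=C(z_2)$, $d=D(z_2)$. Then $e=ac+bd$ is exactly $\Abf(\zbf)$, and a short check shows the partner $f=b^*c-a^*d$ equals $\Bbf^*(\zbf)$, so that $ff^*=\Bbf\Bbf^*$. Lemma \ref{lem:2Identity} therefore gives $\Abf\Abf^*+\Bbf\Bbf^*=(aa^*+bb^*)(cc^*+dd^*)$. Since the shifted element satisfies $b^*=B^*(z_1)$, one gets $aa^*+bb^*=z_1^{L}(AA^*+BB^*)=2L\,z_1^{2L-1}$ and $cc^*+dd^*=CC^*+DD^*=2M\,z_2^{M-1}$; multiplying and inserting the two Golay identities collapses the product to $4LM\,z_1^{2L-1}z_2^{M-1}$. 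This is precisely condition (\ref{PCondition}) for a GCA pair of size $2L\times M$, because $w(\Abf)+w(\Bbf)=4LM$.

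It remains to confirm the polyphase property, which is immediate here: each entry of $\Abf$ is a product $\abf[i]\cbf[k]$ or $\bbf[i]\dbf[k]$ and each entry of $\Bbf$ is of the form $-\abf[i]\overline{\dbf[\cdot]}$ or $\bbf[i]\overline{\cbf[\cdot]}$, hence unimodular; the two blocks occupy disjoint rows, so no coefficients are ever summed and unimodularity cannot be destroyed. The step I expect to demand the most care is the involution bookkeeping around the row-shift $z_1^{L}$: because $a,b$ depend only on $z_1$ and $c,d$ only on $z_2$, one must track the size-dependent monomial $z_1^{s_1-1}z_2^{s_2-1}$ of (\ref{ppp}) carefully to verify that $*$ sends $C^*,D^*$ back to $C,D$, that $b=z_1^{L}B(z_1)$ has $b^*=B^*(z_1)$, and that the mixed terms $AB^*CD^*$ and $z_1^{2L}BA^*DC^*$ arising in $\Abf\Abf^*$ and $\Bbf\Bbf^*$ cancel under commutativity, leaving the shift to contribute only the overall factor $z_1^{2L-1}$. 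This cancellation is the very mechanism already isolated in the proof of Lemma \ref{lem:2Identity}, which is why routing the argument through that lemma is the cleanest option.
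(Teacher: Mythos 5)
Your proof is correct, and it is essentially the argument the paper itself uses: the paper states Theorem \ref{thm:GCM_v1} only by citation, but its proof of the generalization Theorem \ref{thm:N-ary} is exactly your computation --- the row-stacking is encoded as the monomial shift $b=z_1^{L}B(z_1)$ (equivalently, the disjoint zero-padded pair $\Abf'=\Abf\vert\mathbf{0}$, $\Bbf'=\mathbf{0}\vert\Bbf$), after which Lemma \ref{lem:2Identity} and the degree bookkeeping give $\Abf\Abf^*+\Bbf\Bbf^*=4LM\,z_1^{2L-1}z_2^{M-1}$. Your only extra step, checking that the partner $f=b^*c-a^*d$ equals $\Bbf^*(\zbf)$ rather than $\Bbf(\zbf)$ for the specific second matrix given in the statement, is handled correctly since $ff^*=\Bbf^*\Bbf=\Bbf\Bbf^*$.
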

  
   The feasible sizes of the above quaternary GCM pairs are $2g_q^{(1)}\times g_q^{(2)}$ or $g_q^{(1)}\times 2g_q^{(2)}$, where $g_q^{(1)}$ and $g_q^{(2)}$, referred as quaternary Golay numbers, are of the form \cite{craigen2002complex}
   \begin{align} & g_q^{(1)},g_q^{(2)} \in  \label{eqquaGolayLen}\\
   &\left\{ 2^{a+u}3^{b}5^{c}11^{d}13^{e} \left| \begin{array}{r}a, b, c, d, e, u \in {\mathbb Z}^+, u \leq c+e, \\  b+c+d+e \leq a+2u+1 \end{array} \right. \right\}, \nonumber\end{align}  
   which is a set denser than (\ref{eqbinGolayLen}). Similar to the binary counterpart, the lengths of the quaternary seed sequences, i.e., $3$, $5$, $11$ and $13$, are referred as basic quaternary Golay number (BQGN).
  
We generalize Theorem \ref{thm:GCM_v1} to a multi-dimensional version using the concept of polynomial ring, which yields a denser feasible set than Theorem \ref{thm:GCM_v1}.

  \begin{theorem} \label{thm:N-ary}
    Given two polyphase GCA pairs $\{\Abf, \Bbf\}$ of size $s_1\times \cdots\times s_r$ and $\{\Cbf, \Dbf\}$ of size $t_1\times \cdots\times t_r$ respectively, a polyphase GCA pair $\{\Ebf, \Fbf\}$ of size $s_1t_1\times \cdots\times 2s_it_i\times \cdots \times s_rt_r$ can be constructed as
    \bea
    \Ebf &= \Abf \otimes \Cbf \vert \Bbf \otimes \Dbf,\\
    \Fbf &= \Bbf^* \otimes \Cbf \vert -\Abf^* \otimes \Dbf.
    \eea
    where $\vert$ denotes concatenating two arrays in the $i$-th dimension.
  \end{theorem}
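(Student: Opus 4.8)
The plan is to follow the template of the proof of Theorem~\ref{thm:binary}: translate the construction into polynomials, recognize the two blocks of $\Ebf$ and $\Fbf$ as the pair $(e,f)$ furnished by Lemma~\ref{lem:2Identity}, and then read off the complementarity condition. First I would rewrite the two operations of the construction at the polynomial level. By Proposition~\ref{prop:kron}, the four Kronecker products $\Abf\otimes\Cbf$, $\Bbf\otimes\Dbf$, $\Bbf^*\otimes\Cbf$, $\Abf^*\otimes\Dbf$ have polynomials $A(\zbf^\tbf)C(\zbf)$, $B(\zbf^\tbf)D(\zbf)$, $B^*(\zbf^\tbf)C(\zbf)$, $A^*(\zbf^\tbf)D(\zbf)$, and all are arrays of the common size $s_1t_1\times\cdots\times s_rt_r$. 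Concatenating two such arrays in the $i$-th dimension amounts to multiplying the second summand by $z_i^{s_it_i}$, so that
\bea
E(\zbf)&=A(\zbf^\tbf)C(\zbf)+z_i^{s_it_i}B(\zbf^\tbf)D(\zbf),\\
F(\zbf)&=B^*(\zbf^\tbf)C(\zbf)-z_i^{s_it_i}A^*(\zbf^\tbf)D(\zbf).
\eea

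Next I would apply Lemma~\ref{lem:2Identity} with the involution taken to be the conjugate-reciprocal map $\tilde A(\zbf):=\overline{A(\zbf^{-1})}$, which by~(\ref{ppp}) equals $\prod_k z_k^{-(s_k-1)}A^*(\zbf)$ and for which the pair condition~(\ref{PPCondition}) becomes $A\tilde A+B\tilde B=2\prod_k s_k$, a constant (both pairs being polyphase). Setting $a=A(\zbf^\tbf)$, $b=B(\zbf^\tbf)$, $c=C(\zbf)$ and $d=z_i^{s_it_i}D(\zbf)$, a short computation gives $e=ac+bd=E(\zbf)$ and $f=\tilde b\,c-\tilde a\,d=\prod_k z_k^{-t_k(s_k-1)}F(\zbf)$, so $f$ is merely a monomial multiple of $F(\zbf)$. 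Since any monomial $\mu=\prod_k z_k^{m_k}$ satisfies $\mu\tilde\mu=1$, that prefactor cancels and $e\tilde e+f\tilde f=E\tilde E+F\tilde F$. Lemma~\ref{lem:2Identity} then gives
\be
E\tilde E+F\tilde F=(a\tilde a+b\tilde b)(c\tilde c+d\tilde d).
\ee
I would evaluate the two factors separately: substituting $z_k\mapsto z_k^{t_k}$ in the pair identity for $\{\Abf,\Bbf\}$ yields $a\tilde a+b\tilde b=2\prod_k s_k$, while $c\tilde c+d\tilde d=C\tilde C+D\tilde D=2\prod_k t_k$ because the shift in $d$ is undone by its conjugate-reciprocal. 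Hence $E\tilde E+F\tilde F=4\prod_k s_kt_k$, which by Definition~\ref{def:GCA_2} is exactly $w(\Ebf)+w(\Fbf)$ for polyphase arrays of the stated size, so $\{\Ebf,\Fbf\}$ is complementary.

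It then remains to confirm that $\Ebf$ and $\Fbf$ are bona fide polyphase arrays of the asserted size. The four Kronecker products share the size $s_1t_1\times\cdots\times s_rt_r$, so concatenation in the $i$-th dimension produces arrays of size $s_1t_1\times\cdots\times 2s_it_i\times\cdots\times s_rt_r$. Each entry of a Kronecker product of two polyphase arrays is a product of unit roots, hence a unit root, and $\Abf^*,\Bbf^*$ remain polyphase; moreover the two concatenated blocks occupy the disjoint index ranges $0,\dots,s_it_i-1$ and $s_it_i,\dots,2s_it_i-1$ in the $i$-th dimension, so no coefficients are ever added. This is exactly why concatenation replaces the addition used in Theorem~\ref{thm:binary}: it preserves the polyphase property without any disjointness trick.

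The main obstacle I expect is the involution bookkeeping rather than the algebra. One must keep track of the two sources of monomial shifts --- the factor $z_i^{s_it_i}$ coming from the concatenation and the factor $\prod_k z_k^{s_k-1}$ relating $A^*$ to $\tilde A$ --- and verify that they cancel so that the right-hand side collapses to the precise constant demanded by Definition~\ref{def:GCA_2}. Working with the conjugate-reciprocal involution (equivalently, with Definition~\ref{def:GCA_2} rather than Definition~\ref{def:GCA_3}) is what keeps this manageable; with the size-dependent involution $*$ the shifted polynomial $d$ would not be the polynomial of a clean array, and the shift factors would be considerably more awkward to control.
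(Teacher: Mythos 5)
Your proof is correct and follows essentially the same route as the paper's: both identify $(\Ebf,\Fbf)$ with the pair $(e,f)$ of Lemma~\ref{lem:2Identity} applied to $a,b$ drawn from $\{\Abf,\Bbf\}$ and $c,d$ from $\{\Cbf,\Dbf\}$, and both note that concatenation keeps the two blocks on disjoint index ranges so the entries stay polyphase. The only difference is bookkeeping: the paper absorbs the concatenation shift by zero-padding $\Abf,\Bbf$ into a disjoint pair $\Abf',\Bbf'$ and works with the involution $*$ of Definition~\ref{def:GCA_3}, whereas you carry the shift as an explicit monomial on $d$ and work with the conjugate-reciprocal involution of Definition~\ref{def:GCA_2}; the two are equivalent.
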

  \begin{proof}
    Suppose
    \ben \label{eq36}
     \Abf^{\prime} := \Abf \vert {\bf 0}, \quad \Bbf^{\prime} := {\bf 0} \vert \Bbf
    \een
    where $\bf{0}$ is a zero array of size $t_1\times \cdots \times t_r$. It's obvious that $\{\Abf^{\prime}, \Bbf^{\prime}\}$ is a disjoint GCA pair. Let $a$ and $b$ be $A^\prime(\zbf^{\tbf})$ and $B^\prime(\zbf^{\tbf})$ respectively, $c$ and $d$ be $C(\zbf)$ and $D(\zbf)$ respectively. Then
    \ben
    aa^*+bb^* = 2 z_i^{t_{i}s_{i}}\prod_{j=1}^{r} s_j z_j^{t_j(s_j-1)}, \ cc^*+dd^* = 2\prod_{j=1}^r t_j z_j^{t_j-1}.
    \een
    Noting that
    \ben
    \Ebf = \Abf^\prime \otimes \Cbf + \Bbf^\prime \otimes \Dbf, \quad \Fbf = \Bbf^{\prime *} \otimes \Cbf - \Abf^{\prime *} \otimes \Dbf,
    \een
    let $e$ and $f$ be $E(\zbf)$ and $F(\zbf)$ respectively, then by Lemma \ref{lem:2Identity},
    \ben
    ee^*+ff^* = (aa^*+bb^*)(cc^*+dd^*) = 4 z_i^{s_it_i}\prod_{j=1}^{r} s_j t_j z_j^{s_{j}t_{j}-1},
    \een
    which satisfies the GCA Definition \ref{def:GCA_3}. 
    Besides, the entries remain polyphase since $\Abf^{\prime}$ and $\Bbf^{\prime}$ are disjoint.
  \end{proof}

By Theorem \ref{thm:N-ary} we can construct the quaternary GCM pairs of more feasible sizes than Theorem \ref{thm:GCM_v1}. For example, from quaternary Golay sequence pairs $\{\Abf, \Bbf\}$ of size $3\times 1$ and a trivial quaternary Golay sequence pair $\{\Cbf, \Dbf\}$ of size $1\times 1$, we first construct two quaternary GCM pairs of size $3\times 2$ by Theorem \ref{thm:N-ary}, based on which we can use Theorem \ref{thm:N-ary} again to construct a quaternary GCM pair of size $9\times 8$. Such a GCM, however, cannot be constructed by Theorem \ref{thm:GCM_v1}, since $9$ is not a quaternary Golay number [cf. (\ref{eqquaGolayLen})]. 
  
  Theorem \ref{thm:N-ary} can be viewed as a polyphase and multi-dimensional version of the concatenation method in \cite[General Properties 11)]{golay1961complementary}. An interleaved alternately version is also straightforward to prove by modification of the polynomials, but it does not produce more feasible sizes.

  One major technique of the construction in Theorem \ref{thm:N-ary} is to construct a disjoint GCA set [cf. (\ref{eq36})] for obviating the summation of unit roots. And there is more flexible approaches. In \cite{craigen2002complex} it was proved that $g_b g_q^{(1)} g_q^{(2)}$ is a quaternary Golay number if $g_b$ is a binary Golay number and $g_q^{(1)}, g_q^{(2)}$ are quaternary Golay numbers, of which the main idea is to find a weight-deficient Golay sequence pair with special structure by exploiting the well-known symmetry of the binary Golay sequence pair \cite{golay1961complementary}. Inspired by this idea, first we prove the symmetry of a binary GCA pair in Proposition \ref{prop:sym}, then utilize it to generalize the method in \cite{craigen2002complex} to construct the polyphase GCA pairs in Theorem \ref{thm:G-N-ary}. %

  Here are some notations to be used:
  \bea
  \Abf\left[\ibf\right] &:= \Abf\left[i_1, \cdots, i_r \right], \\
  \Abf\left[\ibf+\jbf\right] &:= \Abf\left[i_1+j_1, \cdots, i_r+j_r \right], \\
  \ibf+1 &:= \ibf^{\prime}\ where \ \sum_{k=1}^{r}i^{\prime}_{k}\prod_{l=1}^{k-1}s_{l} = 1+\sum_{k=1}^{r}i_{k}\prod_{l=1}^{k-1}s_{l}\\
  \sum_{\jbf=\bf{0}}^{\ibf} &:= \sum_{j_1=0}^{i_1} \cdots \sum_{j_r=0}^{i_r}, \quad
  \prod_{\jbf=\bf{0}}^{\ibf} := \prod_{j_1=0}^{i_1} \cdots \prod_{j_r=0}^{i_r}, \\
  N_{\ibf} &:= \prod_{k=1}^{r} \left(i_k+1\right), \quad \zbf^{\ibf} := z_1^{i_1} \cdots z_r^{i_r}.\\
  \ibf < \jbf &:= \sum_{k=1}^{r}i_{k}\prod_{l=1}^{k-1}s_{l} < \sum_{k=1}^{r}j_{k}\prod_{l=1}^{k-1}s_{l}.
  \eea

  \begin{Proposition} \label{prop:sym}
    Given a nontrivial binary GCA pair $\{\Abf, \Bbf\}$ of size $s_1 \times \cdots \times s_r$, for any given index $\ibf$ satisfying $\bf{0} \leq \ibf \leq \sbf-{\bf{1}}$, 
    \ben \label{symmetry}
    \Abf\left[\sbf-\bf{1}-\ibf\right] \Abf\left[\ibf\right] \Bbf\left[\sbf-\bf{1}-\ibf\right] \Bbf\left[\ibf\right] = -1.
    \een
  \end{Proposition}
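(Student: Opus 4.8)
The plan is to reduce the $r$-dimensional statement to the classical one-dimensional symmetry of binary Golay sequence pairs by ``unrolling'' the arrays into a single sequence, and then to transport the one-dimensional result back. First I would recast (\ref{symmetry}) in a form adapted to binary entries. Introducing the entrywise product array $\Dbf$ with $\Dbf[\ibf] := \Abf[\ibf]\Bbf[\ibf] \in \{1,-1\}$, the left-hand side of (\ref{symmetry}) is exactly $\Dbf[\ibf]\Dbf[\sbf-{\bf 1}-\ibf]$, so the claim is equivalent to the anti-palindromic identity $\Dbf[\sbf-{\bf 1}-\ibf] = -\Dbf[\ibf]$ for every $\ibf$. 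Both the involution $\ibf \mapsto \sbf-{\bf 1}-\ibf$ and the relation (\ref{symmetry}) are symmetric under $\ibf \leftrightarrow \sbf-{\bf 1}-\ibf$, so it suffices to verify it on one half of the index box; note also that for a nontrivial binary GCA pair some $s_k$ is even, whence $\sbf-{\bf 1}$ has an odd coordinate and the involution has no integer fixed point.

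Next I would collapse the $r$ dimensions into one. Put $n := \prod_{k=1}^{r}s_k$ and $M_k := \prod_{l=1}^{k-1}s_l$, and consider the ring homomorphism $\sigma$ sending $z_k \mapsto z^{M_k}$, which carries each monomial $\zbf^{\ibf}$ to $z^{\ell(\ibf)}$, where $\ell(\ibf) := \sum_{k=1}^{r} i_k M_k$ is precisely the mixed-radix index underlying the total order $<$. Because $0 \le i_k \le s_k-1$, the map $\ibf \mapsto \ell(\ibf)$ is a bijection onto $\{0,1,\dots,n-1\}$, so $\sigma$ causes no coefficient collisions and $\tilde A(z) := \sigma(A(\zbf)) = \sum_{p=0}^{n-1}\tilde a_p z^{p}$ is the univariate polynomial of the unrolled sequence $\tilde a_p := \Abf[\ell^{-1}(p)]$, and similarly for $\tilde b$. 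Two observations make this decisive. Using the telescoping identity $\sum_{k}M_k(s_k-1) = M_{r+1}-M_1 = n-1$ together with (\ref{ppp}), one checks that $\sigma$ carries the array map $*$ to the one-dimensional flip-and-conjugate, i.e. $\sigma(A^*(\zbf)) = \tilde A^{*}(z)$; applying $\sigma$ to the defining identity (\ref{PCondition}) for $\{\Abf,\Bbf\}$ then yields $\tilde A(z)\tilde A^{*}(z) + \tilde B(z)\tilde B^{*}(z) = 2n\,z^{n-1}$, so $\{\tilde a, \tilde b\}$ is a one-dimensional binary Golay sequence pair of length $n$. Likewise $\ell(\sbf-{\bf 1}-\ibf) = (n-1)-\ell(\ibf)$, so the involution becomes $p \mapsto n-1-p$ and the whole claim reduces to $\tilde a_p \tilde a_{n-1-p}\tilde b_p \tilde b_{n-1-p} = -1$ for all $p$.

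Finally I would establish this one-dimensional symmetry. The base case $p=0$ is immediate from the extreme aperiodic autocorrelation in (\ref{corr}), where a single term survives: $\Rbf_{\tilde A}[n-1] = \tilde a_{n-1}\overline{\tilde a_0}$ and $\Rbf_{\tilde B}[n-1] = \tilde b_{n-1}\overline{\tilde b_0}$, so complementarity forces $\tilde a_{n-1}\tilde a_0 + \tilde b_{n-1}\tilde b_0 = 0$, i.e. $\tilde a_0\tilde a_{n-1}\tilde b_0\tilde b_{n-1} = -1$. The interior indices are the heart of the matter and are the main obstacle: the symmetric (Hadamard-type) products $\tilde a_p\tilde a_{n-1-p}$ are \emph{not} linear combinations of the aperiodic autocorrelations $\Rbf_{\tilde A}[\delta]$ — each $\Rbf_{\tilde A}[n-1-2p]$ contains $\tilde a_p\tilde a_{n-1-p}$ only as its central term, flanked by products of non-mirror index pairs that no further autocorrelation cancels — so the complementarity relations alone (linear in those autocorrelations) cannot isolate the interior products. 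What rescues the argument is the unimodularity of the entries: writing $\tilde e_p := \tilde a_p\tilde b_p \in \{1,-1\}$, the full claim is equivalent to $\sum_{p=0}^{n-1}\tilde e_p\tilde e_{n-1-p} = -n$, which (being a sum of $n$ terms each in $\{\pm1\}$) pins every summand to $-1$. This is exactly the classical symmetry of binary Golay pairs proved in \cite{golay1961complementary} (see also \cite{craigen2002complex}), which I would invoke; transporting it back through $\ell(\cdot)$ delivers (\ref{symmetry}).
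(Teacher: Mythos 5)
Your proof is correct, but it takes a genuinely different route from the paper's. You collapse the $r$-dimensional array to a length-$n=\prod_k s_k$ sequence via the substitution $z_k\mapsto z^{M_k}$ with $M_k=\prod_{l<k}s_l$, check (correctly, via the telescoping $\sum_k(s_k-1)M_k=n-1$) that this ring homomorphism is collision-free, intertwines the two mirror involutions $\ibf\mapsto\sbf-{\bf 1}-\ibf$ and $p\mapsto n-1-p$, and carries the GCA identity to the one-dimensional Golay identity; you then invoke Golay's classical symmetry $a_pa_{n-1-p}b_pb_{n-1-p}=-1$ as a known result. The paper instead proves the statement directly in $r$ dimensions by induction along the mixed-radix total order: each off-center autocorrelation coefficient is a vanishing sum of $2N_{\ibf}$ terms in $\{\pm1\}$, so the product of all its terms is $(-1)^{N_{\ibf}}$, and the corner factor is peeled off using the inductive hypothesis on the $N_{\ibf}-1$ smaller indices. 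Your reduction is more modular and makes explicit the standard unrolling of a GCA pair into a Golay sequence pair (a fact the paper itself uses elsewhere, citing Jedwab--Parker), at the cost of outsourcing the actual combinatorial content to the cited one-dimensional result; the paper's argument is self-contained and is essentially that classical product-of-a-balanced-sum trick carried out in situ. Two harmless loose ends in your write-up: the assertion that some $s_k$ must be even is stated without proof (it is in fact a consequence of the symmetry, not an ingredient), and the sentence reducing the claim to $\sum_p\tilde e_p\tilde e_{n-1-p}=-n$ is not needed once you cite the classical result outright.
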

  \begin{proof}
    For any $\bf{0} \leq \ibf < \sbf-{\bf{1}}$, it follows from (\ref{corr}) and (\ref{ACondition}) that
    \ben \label{coff}
    \sum_{\jbf=\bf{0}}^{\ibf} \Abf\left[\sbf-\bf{1}-\ibf+\jbf\right] \Abf\left[\jbf\right] + \Bbf\left[\sbf-\bf{1}-\ibf+\jbf\right] \Bbf\left[\jbf\right] = 0.
    \een
    The left side of (\ref{coff}) must be a summation of $N_{\ibf}$ "$1$"s and $N_{\ibf}$ "$-1$"s. Therefore,
    \bea \label{power}
    & \prod_{\jbf=\bf{0}}^{\ibf} \Abf\left[\sbf-\bf{1}-\ibf+\jbf\right] \Abf\left[\jbf\right] \Bbf\left[\sbf-\bf{1}-\ibf+\jbf\right] \Bbf\left[\jbf\right] \\
    =& \prod_{\jbf=\bf{0}}^{\ibf} \Abf\left[\sbf-\bf{1}-\jbf\right] \Abf\left[\jbf\right] \Bbf\left[\sbf-\bf{1}-\jbf\right] \Bbf\left[\jbf\right] \\
    =& (-1)^{N_{\ibf}}.
    \eea
    By (\ref{power}), (\ref{symmetry}) holds when $\ibf = \left[0, 0, \cdots, 0\right]$. Suppose (\ref{symmetry}) holds for $\forall \ \ibf \leq \kbf$, then by (\ref{power}), for $\ibf = \kbf+1$, 
    \ben \label{eq:induction_k}
    \prod_{\jbf=\bf{0}}^{\kbf+1} \Abf\left[\sbf-\bf{1}-\jbf\right] \Abf\left[\jbf\right] \Bbf\left[\sbf-\bf{1}-\jbf\right] \Bbf\left[\jbf\right]
    = (-1)^{N_{\kbf+1}}.
    \een
    The indices $\jbf$ in (\ref{eq:induction_k}) satisfy $\jbf \leq \kbf$ except for $\jbf = \kbf+1$. Hence
    \bea
    &\Abf\left[\sbf-{\bf{1}}-\kbf-1\right] \Abf\left[\kbf+1\right] \Bbf\left[\sbf-{\bf{1}}-\kbf-1\right] \Bbf\left[\kbf+1\right]\\
    =& \ \frac{(-1)^{N_{\kbf+1}}}{(-1)^{N_{\kbf+1}-1}}\\
    =& \ -1.
    \eea
    
    Thus (\ref{symmetry}) holds for any ${\bf{0}} \leq {\ibf} < \sbf-{\bf{1}}$.
    Noting that (\ref{symmetry}) is identical when $\ibf = \bf{0}$ or $\ibf={\sbf}-\bf{1}$, thus (\ref{symmetry}) holds for any ${\bf{0}} \leq {\ibf} \leq \sbf-{\bf{1}}$.
  \end{proof}
  Based on Lemma \ref{lem:2Identity} and Proposition \ref{prop:sym}, we have the following method to assign zeros in the GCA pairs.

  \begin{Proposition} \label{prop:disjoint}
    Suppose $\{\Abf, \Bbf\}$ a binary GCA pair and $\{\Pbf, \Qbf\}$ a GCA pair with entries $\{1, -1, 0\}$, and for any given position $\ibf$, exactly one of $\Pbf[\ibf]$, $\Qbf[\ibf]$, $\Pbf^*[\ibf]$ and $\Qbf^*[\ibf]$ equals $\pm 1$ while the others equal $0$. Then there exists such $\{\Pbf, \Qbf\}$ of size $s_1\times \cdots \times s_r$ if and only if there exists such $\{\Abf, \Bbf\}$ of the same size.
  \end{Proposition}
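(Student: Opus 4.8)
The plan is to prove both implications constructively, giving explicit formulas that convert one kind of pair into the other. In each direction the complementarity will be forced by a short difference-of-squares identity in the commutative ring (a specialization of the same mechanism that underlies Lemma \ref{lem:2Identity}), while the delicate entrywise structure will be controlled by the symmetry of Proposition \ref{prop:sym}. Throughout I use that $*$ is an involutive automorphism, so $(\Abf^*)^*=\Abf$ and $*$ fixes the rational scalars $\tfrac14,\pm1$.

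For the implication that a binary pair $\{\Abf,\Bbf\}$ yields a weight-deficient pair, I would set
\[
\Pbf = \tfrac{1}{4}\bigl(\Abf - \Abf^* + \Bbf + \Bbf^*\bigr), \qquad \Qbf = \tfrac{1}{4}\bigl(\Abf + \Abf^* - \Bbf + \Bbf^*\bigr).
\]
To check these are ternary with the one-nonzero-of-four property, I fix a position $\ibf$ and abbreviate $a=\Abf[\ibf]$, $a'=\Abf[\sbf-\mathbf{1}-\ibf]$, $b=\Bbf[\ibf]$, $b'=\Bbf[\sbf-\mathbf{1}-\ibf]$, so that $\Abf^*[\ibf]=a'$, $\Bbf^*[\ibf]=b'$, and $\Pbf^*[\ibf],\Qbf^*[\ibf]$ are obtained by swapping the primed and unprimed values. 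Proposition \ref{prop:sym} gives $aa'bb'=-1$; splitting into the cases $a=b$ and $a=-b$ (which by the symmetry force $a'=-b'$ and $a'=b'$ respectively), a short check shows that in each case exactly one of $\Pbf[\ibf],\Qbf[\ibf],\Pbf^*[\ibf],\Qbf^*[\ibf]$ equals $\pm1$ while the other three vanish. Complementarity then follows from a clean identity: with $X=\Abf-\Abf^*$, $Y=\Bbf+\Bbf^*$, $U=\Abf+\Abf^*$, $V=\Bbf^*-\Bbf$ one has $\Pbf=\tfrac14(X+Y)$, $\Pbf^*=\tfrac14(Y-X)$, $\Qbf=\tfrac14(U+V)$, $\Qbf^*=\tfrac14(U-V)$, whence
\[
\Pbf\Pbf^*+\Qbf\Qbf^* = \tfrac{1}{16}\bigl[(U^2-X^2)+(Y^2-V^2)\bigr] = \tfrac{1}{4}\bigl(\Abf\Abf^*+\Bbf\Bbf^*\bigr),
\]
which is a scalar multiple of $\zbf^{\sbf-\mathbf{1}}$ by Definition \ref{def:GCA_3}; hence $\{\Pbf,\Qbf\}$ is a GCA pair.

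For the converse I would run this backwards, setting
\[
\Abf = \Pbf + \Qbf - \Pbf^* + \Qbf^*, \qquad \Bbf = \Pbf - \Qbf + \Pbf^* + \Qbf^*.
\]
Since exactly one of $\Pbf[\ibf],\Qbf[\ibf],\Pbf^*[\ibf],\Qbf^*[\ibf]$ is nonzero and equal to $\pm1$, each of $\Abf[\ibf]$ and $\Bbf[\ibf]$ equals that single surviving entry up to sign, so $\{\Abf,\Bbf\}$ is binary. Regrouping as $\Abf=(\Pbf-\Pbf^*)+(\Qbf+\Qbf^*)$ and $\Bbf=(\Pbf+\Pbf^*)+(\Qbf^*-\Qbf)$ and applying the same difference-of-squares manipulation yields $\Abf\Abf^*+\Bbf\Bbf^*=4(\Pbf\Pbf^*+\Qbf\Qbf^*)$, so complementarity of $\{\Pbf,\Qbf\}$ transfers to $\{\Abf,\Bbf\}$ and the binary GCA pair exists.

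The routine part is the complementarity, which in both directions is a two-line algebraic identity. The delicate step — and the only place Proposition \ref{prop:sym} is really used — is the forward verification that the quarter-integer combinations actually land in $\{1,-1,0\}$ and obey the one-nonzero-of-four rule. Here I expect the main obstacle to be bookkeeping the antipodal map $\ibf\mapsto\sbf-\mathbf{1}-\ibf$ together with a parity subtlety: the case analysis closes precisely when this involution is fixed-point-free, i.e.\ when some $s_k$ is even (as it is for every nontrivial binary Golay size, since $2^{a}10^{b}26^{c}>1$ is always even), whereas a self-antipodal center would give $\Pbf[\ibf]=\pm\tfrac12$ and would also be incompatible with $aa'bb'=-1$. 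I would therefore carry the standing assumption that the pair is nontrivial, so that Proposition \ref{prop:sym} applies at every position, and dispose of the degenerate all-ones size separately.
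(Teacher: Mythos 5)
Your proof is correct and follows essentially the same route as the paper's: the forward construction $\Pbf,\Qbf=\tfrac14\left[\Abf+\Bbf\pm\left(\Bbf^*-\Abf^*\right)\right]$ is the paper's up to replacing $\Qbf$ by $\Qbf^*$ (immaterial for the stated structure), the entrywise verification rests on Proposition \ref{prop:sym} in the same way, and your converse recombination $\Abf,\Bbf=\Pbf\pm\Pbf^*\pm\Qbf\pm\Qbf^*$ is the paper's two-step passage through the disjoint pair $\Xbf=\Pbf+\Qbf$, $\Ybf=\Qbf^*-\Pbf^*$ collapsed into one formula. The only cosmetic difference is that you check complementarity by a direct difference-of-squares identity whereas the paper invokes Lemma \ref{lem:2Identity} twice with trivial seed pairs; both amount to the same computation, and both arguments share the same unstated nontriviality assumption on $\{\Abf,\Bbf\}$ needed for Proposition \ref{prop:sym} to apply.
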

  \begin{proof}
  The "if" part: set
  \bea 
    \Pbf &= \frac{1}{4} {\left[\Abf+\Bbf+\left(\Bbf^*-\Abf^*\right)\right]},\\
    \Qbf &= \frac{1}{4} {\left[\Abf+\Bbf-\left(\Bbf^*-\Abf^*\right)\right]}.
  \eea
    Suppose a trivial GCA pair $\Cbf = \Dbf = 1$. Set
    \ben
    \Ebf = \Abf \otimes \Cbf + \Bbf \otimes \Dbf, \quad \Fbf = \Bbf^* \otimes \Cbf - \Abf^* \otimes \Dbf,
    \een
    By Lemma \ref{lem:2Identity}, set $a = A(\zbf)$, $b = B(\zbf)$, $c = C(\zbf)$, $d = D(\zbf), e = E(\zbf)$, $f = F(\zbf)$, then we have
    \ben
    E(\zbf)E^*(\zbf) + F(\zbf)F^*(\zbf) = 4\prod_{i=1}^r s_i z_{i}^{s_i-1}.
    \een
    Noting that 
    \ben
    \Pbf = \frac{1}{4}(\Cbf \otimes \Ebf + \Dbf \otimes \Fbf), \quad \Qbf = \frac{1}{4}(\Dbf^* \otimes \Ebf - \Cbf^* \otimes \Fbf).
    \een
    Using Lemma \ref{lem:2Identity} again, set $a = C(\zbf)$, $b = D(\zbf)$, $c = \frac{1}{4}E(\zbf)$, $d = \frac{1}{4}F(\zbf)$, $e = P(\zbf)$, $f = Q(\zbf)$, then we have
    \ben
    ee^*+ff^* = (aa^*+bb^*)(cc^*+dd^*) = \frac{1}{2}\prod_{i=1}^r s_i z_{i}^{s_i-1}.
    \een
    By Proposition \ref{prop:sym}, without loss of generality, suppose $\Abf[\ibf] = -1$ and $\Bbf[\ibf]=\Abf^*[\ibf]=\Bbf^*[\ibf]=1$ for a given index $\ibf$, then $\Pbf[\ibf]=\Qbf[\ibf]=\Pbf^*[\ibf]=0$ and $\Pbf^*[\ibf]=1$. Thus $\{\Pbf, \Qbf\}$ is a weight-deficient GCA pair satisfying the required structure.
    
    The "only if" part: Set
    \ben 
    \Xbf = \Pbf+\Qbf,\quad
    \Ybf = \Qbf^{*} - \Pbf^{*}, 
    \een
    then $\{\Xbf, \Ybf\}$ is autocorrelation complementary and by the structure of $\{\Pbf, \Qbf\}$, $\{\Xbf, \Ybf\}$ is disjoint. Thus $\{\Xbf, \Ybf\}$ is a disjoint GCA pair. Next we set
    \ben
    \Abf = \Xbf + \Ybf, \quad
    \Bbf = \Xbf - \Ybf,
    \een
    then $\{\Abf, \Bbf\}$ is a binary GCA pair.
  \end{proof}
  
  Based on the structure of the weight-deficient GCA pairs $\{\Pbf, \Qbf\}$, we propose a construction of the polyphase GCA pairs, which is a generalization of Theorem \ref{thm:N-ary} and the construction of the quaternary Golay sequence pairs in \cite{craigen2002complex}.

  \begin{theorem} \label{thm:G-N-ary}
    Given a nontrivial binary GCA pair $\{\Abf, \Bbf\}$ of size $s_1\times \cdots \times s_r$, two polyphase GCA pairs $\{\Cbf, \Dbf\}$ and $\{\Ebf, \Fbf\}$ of size $t_1\times \cdots \times t_r$ and $u_1\times \cdots \times u_r$ respectively, suppose
    \bea 
    \Pbf &= \frac{1}{4} {\left[\Abf+\Bbf+\left(\Bbf^*-\Abf^*\right)\right]},\\
    \Qbf &= \frac{1}{4} {\left[\Abf+\Bbf-\left(\Bbf^*-\Abf^*\right)\right]},\\
    \Xbf &= \Pbf \otimes \Cbf + \Qbf \otimes \Dbf,\\
    \Ybf &= \Qbf^* \otimes \Cbf - \Pbf^* \otimes \Dbf,\\
    \Gbf &= \Xbf \otimes \Ebf + \Ybf \otimes \Fbf,\\
    \Hbf &= \Ybf^* \otimes \Ebf - \Xbf^* \otimes \Fbf,
    \eea
    then $\{\Gbf, \Hbf\}$ is a polyphase GCA pair of size $s_1t_1u_1\times \cdots \times s_rt_ru_r$.
  \end{theorem}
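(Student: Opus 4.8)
The plan is to verify the two requirements of a polyphase GCA pair separately: the autocorrelation complementarity of Definition~\ref{def:GCA_3}, and the unimodularity of every entry of $\Gbf$ and $\Hbf$. The complementarity will follow from two nested applications of Lemma~\ref{lem:2Identity}, exactly as in the proof of Theorem~\ref{thm:N-ary}, while the polyphase property will follow from the special ``exactly one of four entries is nonzero'' structure of $\{\Pbf,\Qbf\}$ guaranteed by Proposition~\ref{prop:disjoint} (which itself rests on the symmetry in Proposition~\ref{prop:sym}).

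For the complementarity, first I would translate each Kronecker product into a polynomial product via Proposition~\ref{prop:kron}. Setting $a=P(\zbf^{\tbf})$, $b=Q(\zbf^{\tbf})$, $c=C(\zbf)$, $d=D(\zbf)$, the definitions of $\Xbf,\Ybf$ become $X(\zbf)=ac+bd$ and $Y(\zbf)=b^*c-a^*d$, where the involution acts as $a^*=(P^*)(\zbf^{\tbf})$, the same bookkeeping used in Theorem~\ref{thm:N-ary}. Since Proposition~\ref{prop:disjoint} yields $P(\zbf)P^*(\zbf)+Q(\zbf)Q^*(\zbf)=\frac{1}{2}\prod_i s_i z_i^{s_i-1}$ and $\{\Cbf,\Dbf\}$ is a polyphase GCA pair, Lemma~\ref{lem:2Identity} gives $X X^*+Y Y^*=\prod_i s_i t_i z_i^{s_i t_i-1}$. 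Repeating the argument with $a'=X(\zbf^{\ubf})$, $b'=Y(\zbf^{\ubf})$, $c'=E(\zbf)$, $d'=F(\zbf)$ turns the definitions of $\Gbf,\Hbf$ into $G(\zbf)=a'c'+b'd'$ and $H(\zbf)=b'^*c'-a'^*d'$, and Lemma~\ref{lem:2Identity} delivers $G G^*+H H^*=2\prod_i s_i t_i u_i z_i^{s_i t_i u_i-1}$, which is precisely condition~(\ref{PCondition}) for the size $s_1t_1u_1\times\cdots\times s_rt_ru_r$.

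For the unimodularity, I would exploit the block structure of the Kronecker products. Writing a position of $\Xbf$ as $\kbf=\ibf\,\tbf+\jbf$ with $0\le j_l<t_l$, one has $\Xbf[\kbf]=\Pbf[\ibf]\Cbf[\jbf]+\Qbf[\ibf]\Dbf[\jbf]$ and $\Ybf[\kbf]=\Qbf^*[\ibf]\Cbf[\jbf]-\Pbf^*[\ibf]\Dbf[\jbf]$. By Proposition~\ref{prop:disjoint}, for each block index $\ibf$ exactly one of $\Pbf[\ibf],\Qbf[\ibf],\Pbf^*[\ibf],\Qbf^*[\ibf]$ is $\pm1$; the first two cases make the whole block of $\Xbf$ unimodular and the corresponding block of $\Ybf$ zero, while the last two cases do the reverse. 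Hence at every position exactly one of $\Xbf,\Ybf$ is unimodular and the other vanishes, and the same holds for $\{\Xbf^*,\Ybf^*\}$. Consequently, in $\Gbf[\kbf\,\ubf+\lbf]=\Xbf[\kbf]\Ebf[\lbf]+\Ybf[\kbf]\Fbf[\lbf]$ and in the analogous expression for $\Hbf$, exactly one summand is nonzero and it is a product of two unimodular numbers, so $\Gbf$ and $\Hbf$ are entirely polyphase.

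I expect the main obstacle to be the second (polyphase) part rather than the routine algebra of the first: one must verify that the ``exactly one of four is nonzero'' property of $\{\Pbf,\Qbf\}$ propagates through the first Kronecker stage into the stronger ``exactly one of two is unimodular at every position'' property of $\{\Xbf,\Ybf\}$, and that this block-level disjointness, together with its image under $*$, is what prevents any cancellation or zero entry from appearing in $\Gbf,\Hbf$. Keeping the involution bookkeeping $a^*=(P^*)(\zbf^{\tbf})$ consistent across the two nested stages is the other point that needs care.
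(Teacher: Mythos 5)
Your proposal is correct and follows essentially the same route as the paper's own (much terser) proof: autocorrelation complementarity via two nested applications of Lemma~\ref{lem:2Identity} with the Kronecker-to-polynomial translation of Proposition~\ref{prop:kron}, and the polyphase property from the ``exactly one of four is nonzero'' structure of $\{\Pbf,\Qbf\}$ in Proposition~\ref{prop:disjoint}, which makes $\Xbf$ and $\Ybf$ disjoint. Your block-index case analysis and the weight bookkeeping $PP^*+QQ^*=\frac{1}{2}\prod_i s_i z_i^{s_i-1}$ are exactly the details the paper leaves implicit, and they check out.
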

  \begin{proof}
    As done in the proof of the previous theorems, the autocorrelation complementarity follows from Lemma \ref{lem:2Identity} recursively. And due to the structure of $\Pbf, \Qbf$ demonstrated in Proposition \ref{prop:disjoint}, $\Xbf$ and $\Ybf$ are disjoint, then $\Gbf$ and $\Hbf$ are still polyphase. Hence $\{\Gbf, \Hbf\}$ is a polyphase GCA pair.
  \end{proof}

  If in Theorem \ref{thm:G-N-ary} $\Abf$ and $\Bbf$ are vectors $\left[1, 1\right]$ and $\left[1, -1\right]$ in the $i$-th dimension respectively and trivial in the other $r-1$ dimensions, Theorem \ref{thm:G-N-ary} degenerates into Theorem \ref{thm:N-ary}; thus, Theorem \ref{thm:G-N-ary} is a generalization of Theorem \ref{thm:N-ary}.
  
   Theorem \ref{thm:G-N-ary} can be used to construct quaternary GCM pairs of more feasible sizes than Theorem \ref{thm:N-ary}. For example, from a binary Golay sequence pair of size $1\times 10$ and two quaternary Golay sequence pairs of size $3\times 1$, a matrix of size $9\times 10$ can be constructed by Theorem \ref{thm:G-N-ary}, which is infeasible in Theorem \ref{thm:N-ary} since $9$ is not a quaternary Golay number and there exists not a quaternary GCM pair of size $3\times 5$ (the nonexistence of quaternary GCM pair of size $3\times 5$ follows from the nonexistence of quaternary Golay sequence pair of length $15$ \cite{bright2021complex} since one may reshape a GCM pair to a Golay sequence pair \cite{jedwab2007golay}).

    In Theorem \ref{thm:G-N-ary}, the binary GCA pair is like a binder to glue two polyphase GCA pairs together. From this viewpoint, we give the feasible sizes of the quaternary GCA pairs in the following corollary .

  \begin{Corollary} \label{Qsize}
    There exist quaternary GCA pairs of size $s_1\times \cdots\times s_r$ where $s_1\cdots s_r = 2^{a+u}3^{b}5^{c}11^{d}13^{e}$, $a, b, c, d, e, u \geq 0, b+c+d+e \leq a+2u+1, u \leq c+e$, and each of $u$ factors $10$ or $26$ shouldn't be factorized into different dimensions.
  \end{Corollary}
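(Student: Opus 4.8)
The plan is to prove the corollary as a bookkeeping consequence of Theorem \ref{thm:G-N-ary}, which I regard as the multi-dimensional analogue of Craigen's recursion \cite{craigen2002complex}: it glues two polyphase GCA pairs together with a binary GCA ``binder'' and multiplies their sizes dimension-wise. The elementary pieces I would feed into the recursion are of two kinds. First, the quaternary Golay seed pairs of lengths $3,5,11,13$ (the BQGN, known from \cite{craigen2002complex}), each regarded as an $r$-dimensional array that is nontrivial in one chosen dimension and trivial in the rest. Second, the binary GCA binders supplied by Theorem \ref{thm:binary}: these have each dimension equal to a binary Golay number from (\ref{eqbinGolayLen}), and the only binders I actually need are the single-dimension pairs of length $2$, $10$, or $26$. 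The special case of Theorem \ref{thm:N-ary} (equivalently, Theorem \ref{thm:G-N-ary} with the length-$2$ binder, as noted after Theorem \ref{thm:G-N-ary}), applied against a trivial pair, lets me double any one dimension, so I can inject surplus factors of $2$ at will.

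With these pieces, I would organize the construction as a full binary tree whose leaves are quaternary seeds and whose internal nodes are binary binders, exactly paralleling the one-dimensional construction, but now recording in which dimension each piece is placed. Tracking the prime factorization, a length-$2$ binder (or a doubling) contributes one factor $2$, a length-$10$ binder contributes $2\cdot 5$, and a length-$26$ binder contributes $2\cdot 13$, all into a single dimension; the seeds contribute the factors $3,5,11,13$. Writing $u$ for the number of $10$- or $26$-binders, each such binder emits exactly one factor $5$ or $13$ (the remaining $5$'s and $13$'s coming from length-$5$ and length-$13$ seeds), whence $u \le c+e$. The relation ``number of leaves $=$ number of internal nodes $+1$'' for the tree, together with the freedom to absorb surplus factors of $2$ by extra doublings against trivial leaves, yields $b+c+d+e \le a+2u+1$; conversely, any $(a,b,c,d,e,u)$ meeting these two inequalities is realized by a suitable tree. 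Since the sizes multiply dimension-wise under Theorem \ref{thm:G-N-ary}, placing each seed and each binder in a prescribed dimension realizes any target $s_1\times\cdots\times s_r$ whose product is the stated quaternary Golay number.

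The remaining point, and the one I expect to require the most care, is the dimension-placement constraint. A binary GCA binder must, by Theorem \ref{thm:binary}, have a binary Golay number in every dimension; since $5$ and $13$ are not binary Golay numbers [cf. (\ref{eqbinGolayLen})], a $10$- or $26$-binder cannot be split across two dimensions and must sit as a length-$10$ or length-$26$ pair in a single dimension. Hence each of the $u$ binders forces its $2\cdot 5$ or $2\cdot 13$ into a common dimension, which is precisely the hypothesis that the $u$ factors $10$ or $26$ are not factorized into different dimensions. I would therefore verify that, under exactly this hypothesis, the factors $2^{a+u}3^{b}5^{c}11^{d}13^{e}$ can be distributed across the $r$ dimensions to match any admissible $(s_1,\dots,s_r)$, and then assemble the pieces by repeated application of Theorem \ref{thm:G-N-ary}; autocorrelation complementarity and preservation of the polyphase property are inherited at every step from that theorem (ultimately from Lemma \ref{lem:2Identity} and the disjointness guaranteed by Proposition \ref{prop:disjoint}). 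The main obstacle is thus not analytic but combinatorial: confirming that the two inequalities together with the $10/26$-confinement are exactly the conditions under which such a tree and dimension assignment exist.
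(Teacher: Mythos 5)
Your proposal is correct and follows essentially the same route as the paper's own proof: both feed the BBGN binders ($2$, $10=2\cdot5$, $26=2\cdot13$) and BQGN seeds ($3,5,11,13$) into Theorem \ref{thm:G-N-ary}, derive $u\le c+e$ from the decomposition of the $10$- and $26$-binders, obtain $b+c+d+e\le a+2u+1$ from the tree-counting fact that $n$ seeds need $n-1$ binders, and attribute the $10/26$-confinement to the impossibility of a $2\times5$ or $2\times13$ binary binder (the paper cites the computationally verified nonexistence of such binary GCM pairs, while you argue only that Theorem \ref{thm:binary} cannot supply them, which suffices for the existence direction claimed here).
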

  \begin{proof}
    First, we consider feeding into Theorem \ref{thm:G-N-ary} the seed Golay sequences, whose lengths are BBGN or BQGN, i.e., $\{2, 10, 26\}$ or $\{3, 5, 11, 13\}$. Noting that $10 = 2\times 5$ and $26 = 2\times 13$, thus $\prod_{i=1}^{r}s_i$ is of form $2^{a+u}3^{b}5^{c}11^{d}13^{e}$ where $u\leq c+e$ is derived from the decomposition of $10$ and $26$. Viewing the role of BBGNs as the binder to glue BQGNs, we need at least $n-1$ BBGNs to combine $n$ BQGNs. In the decomposition of $2^{a+u}3^{b}5^{c}11^{d}13^{e}$ there exist $a+u$ BBGNs and $b+c+d+e-u$ BQGNs, thus $a+u \geq b+c+d+e-u-1 \Rightarrow b+c+d+e \leq a+2u+1$. Second, due to the nonexistence of the binary GCM pair of size $2\times 5$ or $2\times 13$, which has been verified by exhaustive computational search in \cite{jedwab2007golay}, if any of $u$ factors $10$ or $26$ is decomposed into different dimensions, the number of BBGNs decreases.
  \end{proof}

An example of the last constraint: we can construct a quaternary GCM pair of size $9\times 10$ by Theorem \ref{thm:G-N-ary}, but can not construct a quaternary GCM pair of size $18\times 5$.

\section{Constructions of Polyphase GCA Quads} \label{SEC4}

Besides the GCM pairs, the GCM quads can be combined with the $4\times 4$ space-time block code (STBC) \cite{Tarokh1999Space} for omnidirectional transmission \cite{li2021construction}. Previously we proposed a construction of polyphase GCM sets in \cite{jiang2019autocorrelation} from two polyphase Golay sequence sets.
\begin{theorem}[\cite{jiang2019autocorrelation}] \label{thm:naive_cross}
  For $\{{\abf}_1, \cdots, \abf_{L_1}\}$ a polyphase Golay sequence set of length $M$ and $\{{\bbf}_1, \cdots, \bbf_{L_2}\}$ a polyphase Golay sequence set of length $N$, set
  \ben
  \Cbf_{ij} = \abf_i \bbf_j^T, \quad 1\leq i\leq L_1, 1\leq j\leq L_2,
  \een
  then $\{\Cbf_{ij} \mid 1\leq i\leq L_1, 1\leq j\leq L_2\}$ is a polyphase GCM set of size $M\times N$.
\end{theorem}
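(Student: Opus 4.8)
The plan is to recast the outer-product construction in the two-variable polynomial ring and then invoke the cross identity of Lemma~\ref{lem:mnIdentity} directly; the whole point of the proof, and the reason it is now short, is that this lemma already performs all of the combinatorial bookkeeping. Writing $a_i := a_i(z_1) = \sum_{k} \abf_i[k]\, z_1^k$ for the one-variable polynomial of the length-$M$ sequence $\abf_i$ (placed in the first dimension) and $b_j := b_j(z_2) = \sum_{l} \bbf_j[l]\, z_2^l$ for the length-$N$ sequence $\bbf_j$ (placed in the second dimension), I would first observe that $\Cbf_{ij} = \abf_i\bbf_j^T$ has entries $\Cbf_{ij}[k,l] = \abf_i[k]\,\bbf_j[l]$, so that its bivariate polynomial factors as $C_{ij}(\zbf) = a_i(z_1)\, b_j(z_2) = a_i b_j$. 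This is exactly the product form $c_{ij} = a_i b_j$ required by Lemma~\ref{lem:mnIdentity}, with $\Rnum$ taken to be the ring of polynomials in $z_1, z_2$, and with $m = L_1$, $n = L_2$.

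With this identification in place, Lemma~\ref{lem:mnIdentity} yields at once
\[
\sum_{i=1}^{L_1}\sum_{j=1}^{L_2} C_{ij}(\zbf)\, C_{ij}^*(\zbf) = \left(\sum_{i=1}^{L_1} a_i a_i^*\right)\left(\sum_{j=1}^{L_2} b_j b_j^*\right).
\]
Next I would feed in the hypotheses that $\{\abf_i\}$ and $\{\bbf_j\}$ are one-dimensional Golay sequence sets. By Definition~\ref{def:GCA_3} applied with $r=1$, and using that all entries are unimodular so that $w(\abf_i) = M$ and $w(\bbf_j) = N$, the two factors evaluate to $\sum_i a_i a_i^* = L_1 M\, z_1^{M-1}$ and $\sum_j b_j b_j^* = L_2 N\, z_2^{N-1}$. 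Hence the right-hand side equals $L_1 L_2 M N\, z_1^{M-1} z_2^{N-1}$, which is precisely the target $\sum_{i,j} w(\Cbf_{ij})\, z_1^{M-1} z_2^{N-1}$ demanded by Definition~\ref{def:GCA_3} for an $M\times N$ array set, since each $w(\Cbf_{ij}) = MN$.

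It then remains to settle two bookkeeping points. First, the entries of every $\Cbf_{ij}$ stay polyphase for free: because $a_i$ involves only $z_1$ while $b_j$ involves only $z_2$, the product $a_i b_j$ suffers no coefficient collisions, so the $(k,l)$ coefficient is the single product $\abf_i[k]\,\bbf_j[l]$ of two unit roots, hence unimodular — this is the benign case of the composition $\cdot$ flagged in the remark following Theorem~\ref{thm:binary}. Second, the only step needing genuine care is confirming that the ambient involution factors across the two dimensions, i.e.\ that $C_{ij}^* = a_i^* b_j^*$ with $a_i^*(z_1) = z_1^{M-1}\overline{a_i(z_1^{-1})}$ the length-$M$ flip and $b_j^*(z_2) = z_2^{N-1}\overline{b_j(z_2^{-1})}$ the length-$N$ flip. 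The factorization $C_{ij}^* = (a_i b_j)^* = a_i^* b_j^*$ is immediate from the automorphism property $(ab)^* = a^* b^*$ of Definition~\ref{involutive_automorphism}, and the fact that $a_i^*$ reduces to the one-dimensional flip is forced by $a_i$ having $z_2$-degree zero (so its size in the second dimension is $1$ and the flip contributes the factor $z_2^{0}=1$). This is exactly what licenses substituting the one-dimensional Golay conditions into the two factors above, and with it the proof concludes.
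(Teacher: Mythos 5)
Your proof is correct and follows essentially the same route the paper takes: the paper proves the generalization (Theorem \ref{thm:cross}) by exactly this argument, identifying $C_{ij}(\zbf)=A_i(\zbf^{\tbf})B_j(\zbf)$ and invoking Lemma \ref{lem:mnIdentity} together with Definition \ref{def:GCA_3}, and your argument is precisely the specialization of that proof to the rank-one case $\Abf_i$ of size $M\times 1$, $\Bbf_j$ of size $1\times N$. Your extra care in checking that the two-dimensional involution factors as $C_{ij}^*=a_i^*b_j^*$ and reduces to the one-dimensional flips is a detail the paper leaves implicit, but it is consistent with its conventions.
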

Let $L_1=L_2=2$, then the polyphase GCM quads of size $M \times N$ can be constructed, e.g., a quaternary GCM quad of size $3\times 3$ can be constructed from two quaternary Golay sequence pairs of length $3$. 

An interpretation of Theorem \ref{thm:naive_cross} based on the property of the rank-one matrix was given in \cite{jiang2019autocorrelation}. From the commutative ring perspective, in Section \ref{SEC2} we propose the basic identity in Lemma \ref{lem:mnIdentity} behind the construction, and generalize Theorem \ref{thm:naive_cross} slightly to construct the polyphase GCA set. 
\begin{theorem} \label{thm:cross}
  Given two polyphase GCA sets $\{\Abf_1, \cdots, \Abf_m\}$ and $\{\Bbf_1, \cdots, \Bbf_n\}$ of size $s_1\times \cdots\times s_r$ and $t_1\times \cdots\times t_r$ respectively, set
  \ben
  \Cbf_{ij} = \Abf_i \otimes \Bbf_j, \quad 1\leq i\leq m, 1\leq j\leq n,
  \een
  then $\{\Cbf_{ij} \vert 1\leq i\leq m, 1\leq j\leq n\}$ is a polyphase GCA set of size $s_1t_1 \times s_2t_2 \times \cdots \times s_rt_r$.
\end{theorem}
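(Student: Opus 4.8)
The plan is to follow the template of the proof of Theorem \ref{thm:binary}: recast the set-complementarity requirement as a polynomial identity via Definition \ref{def:GCA_3}, realize the Kronecker products as polynomial products through Proposition \ref{prop:kron}, and then read off the result from the product identity of Lemma \ref{lem:mnIdentity}. First I would translate the two hypotheses. Since the arrays are polyphase, every entry is unimodular, so $w(\Abf_i)=\prod_{k=1}^{r}s_k$ and $w(\Bbf_j)=\prod_{k=1}^{r}t_k$, and Definition \ref{def:GCA_3} gives
\be
\sum_{i=1}^{m} A_i(\zbf)A_i^*(\zbf) = m\prod_{k=1}^{r}s_k\, z_k^{s_k-1}, \qquad \sum_{j=1}^{n} B_j(\zbf)B_j^*(\zbf) = n\prod_{k=1}^{r}t_k\, z_k^{t_k-1}.
\ee

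Next, by Proposition \ref{prop:kron} the array $\Cbf_{ij}=\Abf_i\otimes\Bbf_j$ has polynomial $C_{ij}(\zbf)=A_i(\zbf^{\tbf})B_j(\zbf)$. I would then set $a_i:=A_i(\zbf^{\tbf})$ and $b_j:=B_j(\zbf)$ in Lemma \ref{lem:mnIdentity}, so that $c_{ij}=a_ib_j=C_{ij}(\zbf)$, and conclude
\be
\sum_{i=1}^{m}\sum_{j=1}^{n} C_{ij}(\zbf)C_{ij}^*(\zbf) = \Big(\sum_{i=1}^{m}A_i(\zbf^{\tbf})A_i^*(\zbf^{\tbf})\Big)\Big(\sum_{j=1}^{n}B_j(\zbf)B_j^*(\zbf)\Big).
\ee
The one point deserving care is that the involution commutes with the sparse substitution $\zbf\mapsto\zbf^{\tbf}$, i.e.\ $(A_i(\zbf^{\tbf}))^{*}=A_i^{*}(\zbf^{\tbf})$; this is checked directly from (\ref{ppp}), since $A_i(\zbf^{\tbf})$ is the polynomial of the array obtained by spacing $\Abf_i$ out by a factor $t_k$ in dimension $k$, and flipping-conjugating that spaced array coincides with spacing out the flipped-conjugated $\Abf_i$. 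Granting this, the first factor on the right is the substitution $\zbf\mapsto\zbf^{\tbf}$ applied to the first hypothesis, namely $m\prod_k s_k\, z_k^{t_k(s_k-1)}$, and multiplying by the second hypothesis yields $mn\prod_{k=1}^{r}s_kt_k\, z_k^{s_kt_k-1}$, which is precisely condition (\ref{PCondition}) for a GCA set of size $s_1t_1\times\cdots\times s_rt_r$ with total weight $mn\prod_k s_kt_k$.

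It remains to confirm the polyphase property, and here no zero-padding is needed: every entry of $\Cbf_{ij}$ is a product of a unimodular entry of $\Abf_i$ and a unimodular entry of $\Bbf_j$, hence unimodular, so the multiplicative composition is safely realized by the Kronecker product without ever summing unit roots. I do not expect a genuinely hard step in this argument; the only subtlety, and the one I would write out carefully, is the commutation $(A_i(\zbf^{\tbf}))^{*}=A_i^{*}(\zbf^{\tbf})$ that lets the two complementarity sums factor cleanly under Lemma \ref{lem:mnIdentity}. Everything else is substitution into identities already established in the excerpt.
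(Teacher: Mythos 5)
Your proposal is correct and follows essentially the same route as the paper's proof: both apply Lemma \ref{lem:mnIdentity} with $a_i = A_i(\zbf^{\tbf})$, $b_j = B_j(\zbf)$, $c_{ij} = C_{ij}(\zbf)$ and conclude that the product of the two complementarity sums equals $mn\prod_{k=1}^{r}s_kt_k\,z_k^{s_kt_k-1}$, verifying Definition \ref{def:GCA_3}. Your explicit check that the involution commutes with the substitution $\zbf\mapsto\zbf^{\tbf}$, and your remark that the polyphase property holds automatically because no addition of unit roots occurs, are points the paper leaves implicit but are entirely consistent with it.
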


\begin{proof}
  By Lemma \ref{lem:mnIdentity}, let $a_i = A_{i}(\zbf^{\tbf})$, $b_j = B_{j}(\zbf)$, $c_{ij} = C_{ij}(\zbf)$, and the involutive automorphism $*$ is defined as in (\ref{eqastr}), then 
  \bea
  \sum_{i, j} c_{ij}c_{ij}^* =\ &\sum_{i} a_ia_i^* \sum_{j} b_jb_j^*\\
  =\ &m \prod_{k=1}^{r}s_k z_k^{t_k(s_k-1)} \cdot n \prod_{k=1}^{r}t_k z_k^{t_k-1}\\
  =\ &mn \prod_{k=1}^{r} s_kt_k z_k^{s_kt_k-1}.
  \eea
  Thus $\{\Cbf_{ij} \vert 1\leq i\leq m, 1\leq j\leq n\}$ is a polyphase GCA set.
\end{proof}
By Theorem \ref{thm:cross}, we can construct the quaternary GCM quads of more feasible sizes than Theorem \ref{thm:naive_cross}, e.g, from two quaternary GCM pairs of size $3\times 6$, we can construct a quaternary GCM quad of size $9\times 36$, which can not be constructed by Theorem \ref{thm:naive_cross} since $9$ is not a quaternary Golay number.

We note that to keep the entries polyphase, Theorem \ref{thm:G-N-ary} and Theorem \ref{thm:cross} use different methods. Theorem \ref{thm:G-N-ary} assigns zeros properly and uses Lemma \ref{lem:2Identity} recursively to obviate the summation of unit roots, at the expense of multiplying the size of arrays in one dimension with a binary Golay number. The construction in Theorem \ref{thm:cross} is polyphase intrinsically for no composition $+$ in (\ref{eq:cross}) of Lemma \ref{lem:mnIdentity}, though at the expense of enlarging the cardinality of the array set. As a compromise between the cardinality of the set and the size of the arrays, we propose Lemma \ref{lem:compromise} in Section \ref{SEC2}.

Based on Lemma \ref{lem:4Identity} and Lemma \ref{lem:compromise}, we construct the polyphase GCA quads of denser existence pattern than Theorem \ref{thm:cross}.

To use Lemma \ref{lem:4Identity} to construct the polyphase GCA quads, the summation of unit roots should be forbidden by interleaving alternately or concatenating zeros as mentioned in \cite{yang1989composition}. The method of assigning zeros are explained in Proposition \ref{prop:interleave} and \ref{prop:concate_zero}.
\begin{Proposition} \label{prop:interleave}
  Given a polyphase GCA quad $\{\Abf, \Bbf, \Cbf, \Dbf\}$ where the size of $\Abf$ and $\Bbf$ is $s_1\times \cdots\times (s_i+1)\times \cdots\times s_r$ and the size of $\Cbf$ and $\Dbf$ is $s_1\times \cdots\times s_i \times \cdots\times s_r$, suppose
  \bea
  &\Ebf = \Abf/{\bf{0}}_{s_i}, \qquad \Gbf = \Bbf/{\bf{0}}_{s_i}, \\ &\Fbf = {\bf{0}}_{s_i+1}/\Cbf, \quad \Hbf = {\bf{0}}_{s_i+1}/\Dbf,
  \eea
  where $/$ denotes interleaving alternately two arrays in the $i$-th dimension, ${\bf{0}}_{s_i}$ $\left({\bf{0}}_{s_i+1}\right)$ denotes the zero array of the same size as $\Cbf$ $\left(\Abf\right)$.
  Then $\{\Ebf, \Fbf, \Gbf, \Hbf\}$ is a weight-deficient GCA quad of size $s_1\times \cdots\times \left(2s_i+1\right)\times \cdots\times s_r$.
\end{Proposition}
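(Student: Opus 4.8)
The plan is to express the interleaving in the polynomial language of Definition \ref{def:GCA_2} and reduce the assertion to the substitution $z_i\mapsto z_i^2$ in the identity obeyed by the input quad. Set $\zbf':=(z_1,\ldots,z_i^2,\ldots,z_r)$. Alternate interleaving in the $i$-th dimension sends the entries of the first array to the even slots of that dimension and those of the second array to the odd slots, so expanding (\ref{polynomial}) yields
\begin{align*}
E(\zbf)=A(\zbf'),\quad G(\zbf)=B(\zbf'),\\
F(\zbf)=z_i\,C(\zbf'),\quad H(\zbf)=z_i\,D(\zbf'),
\end{align*}
where the factor $z_i$ records the one-slot offset of $\Cbf,\Dbf$ relative to $\Abf,\Bbf$.

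I would then verify the complementarity through the power-spectrum form (\ref{PPCondition}), which is the appropriate reading here since $\Abf,\Bbf$ and $\Cbf,\Dbf$ differ in size along dimension $i$. Because the conjugation in $\overline{A(\zbf^{-1})}$ acts only on the coefficients, the monomial $z_i$ cancels its reciprocal in each power spectrum, for instance
\begin{align*}
F(\zbf)\overline{F(\zbf^{-1})}=z_iC(\zbf')\cdot z_i^{-1}\overline{C((\zbf')^{-1})}=C(\zbf')\overline{C((\zbf')^{-1})},
\end{align*}
and likewise for $E,G,H$. Combined with the identifications above, this shows that the power spectrum of each new array is exactly that of the corresponding input array with $\zbf$ replaced by $\zbf'$.

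To close, I sum the four power spectra and invoke the hypothesis:
\begin{align*}
&E\overline{E(\zbf^{-1})}+F\overline{F(\zbf^{-1})}+G\overline{G(\zbf^{-1})}+H\overline{H(\zbf^{-1})}\\
=\ &A(\zbf')\overline{A((\zbf')^{-1})}+B(\zbf')\overline{B((\zbf')^{-1})}\\
&+C(\zbf')\overline{C((\zbf')^{-1})}+D(\zbf')\overline{D((\zbf')^{-1})}.
\end{align*}
By (\ref{PPCondition}) applied to $\{\Abf,\Bbf,\Cbf,\Dbf\}$ the right-hand side is the constant $w(\Abf)+w(\Bbf)+w(\Cbf)+w(\Dbf)$, which the substitution $z_i\mapsto z_i^2$ leaves unchanged. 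Since interleaving merely inserts zeros, the weights satisfy $w(\Ebf)+w(\Fbf)+w(\Gbf)+w(\Hbf)=w(\Abf)+w(\Bbf)+w(\Cbf)+w(\Dbf)$, so $\{\Ebf,\Fbf,\Gbf,\Hbf\}$ meets Definition \ref{def:GCA_2}. No two coefficients are ever added, so the entries remain $N$-th roots of unity or zero, and the inserted zeros make the quad weight-deficient.

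The step I expect to be the main obstacle is the bookkeeping that keeps the complementarity intact despite the mismatched input sizes: one must recognise that the correct form of the GCA condition for a quad of unequal sizes is the constant power-spectrum identity (\ref{PPCondition}) rather than the single-monomial form (\ref{PCondition}), and then confirm that the offset factor $z_i$ truly vanishes in the power spectrum instead of leaving a stray monomial. Once this is seen the substitution argument is immediate, and the residual checks---that $\Ebf,\Fbf,\Gbf,\Hbf$ share the common size $s_1\times\cdots\times(2s_i+1)\times\cdots\times s_r$, that weights are preserved, and that entries stay polyphase or zero---are routine.
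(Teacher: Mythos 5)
Your proposal is correct and follows essentially the same route as the paper: both identify $E,G,F,H$ with $A,B,C,D$ under the substitution $z_i\mapsto z_i^2$ (with the offset monomial $z_i$ on $F,H$), observe that $z_i$ cancels against $z_i^{-1}$ in the power spectrum, and conclude via Definition \ref{def:GCA_2}. The paper merely packages the map $A(\zbf)\mapsto\overline{A(\zbf^{-1})}$ as an involutive automorphism $\star$ on an extended polynomial ring, while you work with the power spectra directly; the content is identical.
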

\begin{proof}
  Note that
  \bea
  e := E(\zbf)&=A(z_1, \cdots, z_i^2, \cdots, z_r) := a,\\
  g := G(\zbf)&=B(z_1, \cdots, z_i^2, \cdots, z_r) := b,\\
  f := F(\zbf)&=z_{i}C(z_1, \cdots, z_i^2, \cdots, z_r) := z_{i}c,\\
  h := H(\zbf)&=z_{i}D(z_1, \cdots, z_i^2, \cdots, z_r) := z_{i}d.
  \eea
  Define a polynomial ring including the polynomials of both positive and negative powers, and define its involutive automorphism $\star$ as a map from $A(\zbf)$ to $\overline{A(\zbf^{-1})}$. [Do not confuse it with the involutive automorphism $*$ defined by (\ref{eqastr}).]
  Then
  \bea
  ee^\star+ff^\star +gg^\star+hh^\star &\overset{(a)}{=} aa^\star+bb^\star+cc^\star+dd^\star \\
  &= 2(2s_i+1)\prod_{j=1, j\neq i}^r s_j.
  \eea
  where $\overset{(a)}{=}$ holds because $z_i z_i^\star=1$.
  By Definition \ref{def:GCA_2}, we have completed the proof.
\end{proof}

If the array quad $\{\Abf, \Bbf, \Cbf, \Dbf\}$ in Lemma \ref{prop:interleave} is trivial in all but the $i$-th dimensions, then it is named base sequences \cite{dokovic1998aperiodic}, denoted as $BS(m+1, m)$ where $m=s_i$. In \cite{dokovic1998aperiodic}, it was conjectured that there exists binary $BS(m+1, m)$ for any integer $m\geq1$, which has been verified for $m\leq38$ by computational search \cite{djokovic2010base}. Besides, there exists $BS(g+1, g)$ where $g$ is a polyphase Golay number \cite{turyn1974hadamard}.

\begin{Proposition} \label{prop:concate_zero}
  Given a GCA quad $\{\Abf, \Bbf, \Cbf, \Dbf\}$ where the size of $\Abf$ and $\Bbf$ is $s_1\times \cdots\times s_j\times \cdots\times s_r$ and the size of $\Cbf$ and $\Dbf$ is $s_1\times \cdots\times s_j^{\prime} \times \cdots\times s_r$, suppose
  \bea
  &\Ebf = \Abf \vert {\bf{0}}_{s_{j}^\prime} \vert {\bf{0}}_{s_{j}^\prime} \vert \Bbf \quad
  &\Gbf = \Abf \vert {\bf{0}}_{s_{j}^\prime} \vert {\bf{0}}_{s_{j}^\prime} \vert {-\Bbf}\\
  &\Fbf = {\bf{0}}_{s_j} \vert \Cbf \vert \Dbf \vert {\bf{0}}_{s_j} \quad
  &\Hbf = {\bf{0}}_{s_j} \vert \Cbf \vert {-\Dbf} \vert {\bf{0}}_{s_j}
  \eea
  where $\vert$ denotes concatenating two arrays in the $j$-th dimension, ${\bf{0}}_{s_j}$ $\left({\bf{0}}_{s_j^{\prime}}\right)$ denotes the zero array of the same size as $\Abf$ $\left(\Cbf\right)$. Then $\{\Ebf, \Fbf, \Gbf, \Hbf\}$ is a weight-deficient GCA quad of size $s_1\times \cdots\times 2\left(s_j+s_j^{\prime}\right)\times \cdots\times s_r$. 
\end{Proposition}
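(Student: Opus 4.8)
The plan is to follow the polynomial (Z-transform) route used in Proposition~\ref{prop:interleave}, working in the ring of Laurent polynomials equipped with the involutive automorphism $\star$ that sends $A(\zbf)$ to $\overline{A(\zbf^{-1})}$, under which every shift monomial satisfies $z_j^{k}(z_j^{k})^{\star}=1$. First I would translate the four block concatenations in the $j$-th dimension into shift monomials. Writing $a=A(\zbf)$, $b=B(\zbf)$, $c=C(\zbf)$, $d=D(\zbf)$ and reading off the starting index of each block, the polynomials of the constructed arrays are $E(\zbf)=a+z_j^{\,s_j+2s_j'}\,b$, $G(\zbf)=a-z_j^{\,s_j+2s_j'}\,b$, $F(\zbf)=z_j^{\,s_j}c+z_j^{\,s_j+s_j'}d$, and $H(\zbf)=z_j^{\,s_j}c-z_j^{\,s_j+s_j'}d$; in particular the total extent in dimension $j$ is $2(s_j+s_j')$, matching the claimed size.

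Next I would compute $EE^{\star}+FF^{\star}+GG^{\star}+HH^{\star}$. Because $\star$ inverts the exponent of each monomial, every diagonal term loses its shift ($z_j^{k}(z_j^{k})^{\star}=1$), while the $\pm$ structure cancels the cross terms exactly as in a Golay pair: $EE^{\star}+GG^{\star}=2(aa^{\star}+bb^{\star})$ and $FF^{\star}+HH^{\star}=2(cc^{\star}+dd^{\star})$. Summing gives
\[
EE^{\star}+FF^{\star}+GG^{\star}+HH^{\star}=2\left(aa^{\star}+bb^{\star}+cc^{\star}+dd^{\star}\right).
\]
Since $\{\Abf,\Bbf,\Cbf,\Dbf\}$ is a GCA quad, Definition~\ref{def:GCA_2} makes the right-hand side equal to $2\bigl(w(\Abf)+w(\Bbf)+w(\Cbf)+w(\Dbf)\bigr)$, which is precisely $w(\Ebf)+w(\Fbf)+w(\Gbf)+w(\Hbf)$ because each constructed array is a zero-padded copy of two of the inputs; hence the complementarity condition of Definition~\ref{def:GCA_2} holds for $\{\Ebf,\Fbf,\Gbf,\Hbf\}$.

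Finally I would check the entrywise structure: every entry of $\Ebf,\Fbf,\Gbf,\Hbf$ is either a zero coming from the padding or a $\pm$ copy of an entry of one of the inputs, and the interposed zero blocks guarantee that no two nonzero contributions ever land in the same position, so no summation of unit roots occurs and the entries stay unimodular-or-zero; the padding also forces genuine weight deficiency. I do not expect the autocorrelation identity to be the obstacle, since it is the same $\pm$ cancellation as in the pair case, so the only point requiring care is the bookkeeping of the four concatenation offsets and the verification that it is the zero blocks, rather than the algebraic identity, that preserve the polyphase property, exactly paralleling the role played by interleaving in Proposition~\ref{prop:interleave}.
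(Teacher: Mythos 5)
Your proposal is correct and follows essentially the same route as the paper's own proof: translating the concatenations into shift monomials in the Laurent-polynomial ring with the involution $\star$ sending $A(\zbf)$ to $\overline{A(\zbf^{-1})}$, using $z_j^{k}(z_j^{k})^{\star}=1$ and the $\pm$ cancellation to get $ee^{\star}+ff^{\star}+gg^{\star}+hh^{\star}=2(aa^{\star}+bb^{\star}+cc^{\star}+dd^{\star})$, and concluding via Definition \ref{def:GCA_2}. The only (immaterial) difference is which block you place at index zero when assigning the shift monomials.
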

\begin{proof}
  Using the same definitions in the proof of Proposition \ref{prop:interleave}.
  Noting that
  \bea
  e := E(\zbf) &= z_j^{2s_j^\prime+s_j}A(\zbf)+B(\zbf) := xa+b,\\
  g := G(\zbf) &= z_j^{2s_j^\prime+s_j}A(\zbf)-B(\zbf) := xa-b,\\
  f := F(\zbf) &= z_j^{s_j^\prime+s_j}C(\zbf)+z_j^{s_j}D(\zbf) := y_{1}c + y_{2}d,\\
  h := H(\zbf) &= z_j^{s_j^\prime+s_j}C(\zbf)-z_j^{s_j}D(\zbf) := y_{1}c - y_{2}d,
  \eea
  where $xx^\star = y_{1}y_{1}^{\star} = y_{2}y_{2}^{\star} = 1$, then
  \bea
  ee^\star+ff^\star+gg^\star+hh^\star &= 2(aa^\star+bb^\star+cc^\star+dd^\star)\\
  &= 4(s_j+s_j^{\prime})\prod_{i=1, i\neq j}^r s_i.
  \eea
By Definition \ref{def:GCA_2}, we have completed the proof.
\end{proof}

Practically the GCA quad $\{\Abf, \Bbf, \Cbf, \Dbf\}$ in Proposition \ref{prop:concate_zero} can be composed of two GCA pair $\{\Abf, \Bbf\}$ and $\{\Cbf, \Dbf\}$.

In Proposition \ref{prop:interleave} and Proposition \ref{prop:concate_zero}, $\Ebf$, $\Fbf$, $\Gbf$ and $\Hbf$ are quasi-symmetric, i.e., the zeros of the flipped arrays occur in the same positions as the original arrays; $\Ebf$ and $\Gbf$, $\Fbf$ and $\Hbf$ are conjoint, i.e., zeros occur in the same positions; $\Ebf$ and $\Fbf$, $\Gbf$ and $\Hbf$ are disjoint. The above structure is sufficient to avoid the summation of the unit roots when they are combined in the following theorem.

\begin{theorem} \label{thm:quad}
  Construct  two weight-deficient GCA quads of size $m_1\times \cdots\times m_r$ and $n_1\times \cdots\times n_r$ according to Proposition \ref{prop:interleave} or Proposition \ref{prop:concate_zero}, which are denoted by $\{\Abf, \Bbf, \Cbf, \Dbf\}$ and $\{\Ebf, \Fbf, \Gbf, \Hbf\}$,  respectively. Suppose
  \bea
  \Pbf &= \Abf \otimes \Fbf^* - \Bbf^* \otimes \Ebf + \Cbf \otimes \Gbf + \Dbf \otimes \Hbf\\
  \Qbf &= \Abf^* \otimes \Ebf + \Bbf \otimes \Fbf^* - \Cbf \otimes \Hbf^* + \Dbf \otimes \Gbf^*\\
  \Rbf &= \Cbf^* \otimes \Ebf - \Dbf \otimes \Fbf + \Abf \otimes \Hbf^* + \Bbf \otimes \Gbf\\
  \Sbf &= -\Cbf \otimes \Fbf - \Dbf^* \otimes \Ebf + \Abf \otimes \Gbf^* - \Bbf \otimes \Hbf,
  \eea
  then $\{\Pbf, \Qbf, \Rbf, \Sbf\}$ is a polyphase GCA quad of size $m_1n_1\times \cdots\times m_{r}n_{r}$.
\end{theorem}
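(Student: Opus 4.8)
The plan is to recognize $\{\Pbf,\Qbf,\Rbf,\Sbf\}$ as the image of the two input quads under the octonion composition of Lemma \ref{lem:4Identity}, pass to polynomials by Proposition \ref{prop:kron}, and then split the verification into the autocorrelation property (which follows from Lemma \ref{lem:4Identity} and in fact holds for \emph{any} two GCA quads) and the polyphase property (which is exactly where the support structure of Proposition \ref{prop:interleave} and Proposition \ref{prop:concate_zero} is used). First I would set up the dictionary. Taking the ring involution to be the genuine automorphism $\star:X(\zbf)\mapsto\overline{X(\zbf^{-1})}$ of the Laurent-polynomial ring (rather than the size-dependent array flip $*$), I put $a=A(\zbf^{\nbf}),\dots,d=D(\zbf^{\nbf})$ for the first quad and $e=E(\zbf),\dots,h=H(\zbf)$ for the second, where $\nbf=(n_1,\dots,n_r)$ is the second size. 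By Proposition \ref{prop:kron} each Kronecker term becomes a product; the only care is that, through (\ref{ppp}), a flipped factor contributes a monomial, namely $\prod_l z_l^{n_l-1}$ when the inner (second) factor is flipped and $\prod_l z_l^{n_l(m_l-1)}$ when the outer (first) factor is flipped. Thus the polynomial of $\Pbf$ is exactly the octonion output $p=af^\star-b^\star e+cg+dh$ of Lemma \ref{lem:4Identity}, except that each summand is decorated by its flip-monomial, and likewise for $Q,R,S$; since $(\zbf^{\kbf})^{\star}=\zbf^{-\kbf}$, these monomials are units.

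Next I would establish autocorrelation complementarity (Definition \ref{def:GCA_2}) by expanding $\sum_{X\in\{P,Q,R,S\}}XX^{\star}$. Each product splits as (first-quad bilinear)$\,\cdot\,$(second-quad bilinear)$\,\cdot\,$(monomial). The diagonal contributions carry monomial $1$, and summed over the sixteen octonion terms they assemble into $(aa^\star+bb^\star+cc^\star+dd^\star)(ee^\star+ff^\star+gg^\star+hh^\star)$, which by Definition \ref{def:GCA_2} applied to the two quads is a constant (the total weight). For the off-diagonal part the decisive observation is that the monomial attached to a cross term $(\xi_k\xi_l^\star)(\eta_k\eta_l^\star)$ depends only on the \emph{type} of each bilinear: an autocorrelation-type factor $xy^\star$ contributes exponent $0$, an unstarred product $xy$ contributes $-1$, and a doubly-starred product $x^\star y^\star$ contributes $+1$ (times $\nbf(\mbf-\mathbf{1})$, resp. $\nbf-\mathbf{1}$). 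Hence every pair realizing a given bilinear shares one common monomial, which factors out of the group, leaving precisely the signed sums that vanish in the proof of Lemma \ref{lem:4Identity}. Therefore $\sum_X XX^\star$ equals the product of the two complementary sums, a constant, and (\ref{ppp}) recasts this as Definition \ref{def:GCA_3}.

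Finally I would deduce the polyphase property from the support structure guaranteed by Proposition \ref{prop:interleave} and Proposition \ref{prop:concate_zero}, where both quads are quasi-symmetric and split into conjoint/disjoint pairs: for the first quad $\Abf,\Cbf$ share a support $U_1$ while $\Bbf,\Dbf$ share the complementary support $U_2$, and flipping preserves supports; similarly the second quad splits into $V_1$ (carrying $\Ebf,\Gbf$) and $V_2$ (carrying $\Fbf,\Hbf$). Writing a position of the size-$\mbf\nbf$ array as $k_l=i_ln_l+j_l$, the four Kronecker terms of $\Pbf$ occupy the four product supports $U_1\times V_2,\,U_2\times V_1,\,U_1\times V_1,\,U_2\times V_2$, which are pairwise disjoint and together cover everything; hence at each position exactly one summand is nonzero and no sum of distinct unit roots occurs. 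The same bookkeeping applies verbatim to $Q,R,S$, so every entry is a unit root and $\{\Pbf,\Qbf,\Rbf,\Sbf\}$ is polyphase.

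I expect the autocorrelation step to be the main obstacle. Because the inner and outer Kronecker factors are flipped at different scales, the polynomial of each $X$ is not literally the octonion output of Lemma \ref{lem:4Identity} but a monomial-twisted version, so one cannot merely quote the lemma; the crux is to show that the twist is constant on each bilinear type and therefore leaves the cancellation of Lemma \ref{lem:4Identity} intact. The polyphase step, by contrast, is routine once the conjoint, disjoint, and quasi-symmetric support pattern is recorded.
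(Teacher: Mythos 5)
Your proposal is correct and follows essentially the same route as the paper: pass to polynomials via Proposition \ref{prop:kron}, invoke Lemma \ref{lem:4Identity} for autocorrelation complementarity, and use the quasi-symmetric/conjoint/disjoint support structure to keep the entries polyphase. The only difference is that the paper takes the involution in Lemma \ref{lem:4Identity} to be the flip-and-conjugate map $*$ of (\ref{eqastr}) itself, so that $a^*=A^*(\zbf^{\nbf})$ and the lemma is quoted verbatim with the monomial factors absorbed into $*$ (legitimate here because all summands of $\Pbf,\Qbf,\Rbf,\Sbf$ share the degree box $\mbf\nbf-\mathbf{1}$), whereas you work with the Laurent involution $\star$ and re-verify the cancellation by tracking explicit monomial twists---a more laborious but equivalent bookkeeping of the same step.
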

\begin{proof}
  By Lemma \ref{lem:4Identity}, let $a=A(\zbf^{\nbf})$, $b=B(\zbf^{\nbf})$, $c=C(\zbf^{\nbf})$, $d=D(\zbf^{\nbf})$, $e=E(\zbf)$, $f=F(\zbf)$, $g=G(\zbf)$, $h=H(\zbf)$, $p=P(\zbf)$, $q=Q(\zbf)$, $r=R(\zbf)$, $s=S(\zbf)$, and $*$ is as defined in (\ref{eqastr}), then
  \bea
  &pp^*+qq^*+rr^*+ss^*\\
  =\ &(aa^*+bb^*+cc^*+dd^*)(ee^*+ff^*+gg^*+hh^*)\\
  =\ &2\prod_{i=1}^r m_i z_i^{n_i(m_i-1)} \cdot 2\prod_{i=1}^r n_i z_i^{n_i-1}\\
  =\ &4\prod_{i=1}^r m_in_i z_i^{m_in_i-1}.
  \eea
  By Definition \ref{def:GCA_3}, $\{\Pbf, \Qbf, \Rbf, \Sbf\}$ is a GCA quad. Their entries are polyphase due to the quasi-symmetric, conjoint, and disjoint structure of the input arrays.
\end{proof}
\begin{remark}
  Feeding a quad of binary base sequences into Proposition \ref{prop:interleave} or a binary Golay sequence quad (consisting of two binary Golay sequence pairs) into Proposition \ref{prop:concate_zero}, we obtain a two weight-deficient Golay sequence quads and feed them into Theorem \ref{thm:quad}, then a binary GCM quad of size $m\times n$ or $1\times mn$ can be constructed, where $m, n\in \{2s+1\vert 0\leq s\leq 38, or \ s=g_b^{(1)}\} \cup \{2(g_b^{(2)}+g_b^{(3)})\}$ and $g_b^{(1)}, \cdots, g_b^{(3)}$ are binary Golay numbers.
\end{remark}
The sizes of the GCA quads constructed by Theorem \ref{thm:quad} may be constrained by the known existence pattern of the base sequences. We improve the pattern in the following theorem, which is the first application of Lemma \ref{lem:compromise}.
\begin{theorem} \label{thm:baseArray}
  Given a polyphase GCA quad $\{\Pbf, \Qbf, \Rbf, \Sbf\}$ of size $s_1\times \cdots\times s_r$ and a disjoint GCA pair $\{\Ibf, \Jbf\}$ of size $t_1\times \cdots\times t_r$. Suppose
  \bea
  \Pbf^{\prime} &= \Pbf \otimes \Ibf + \Qbf \otimes \Jbf, \quad
  \Qbf^{\prime} = \Pbf \otimes \Jbf^* - \Qbf \otimes \Ibf^*,\\
  \Rbf^{\prime} &= \Rbf \otimes \Ibf + \Sbf \otimes \Jbf, \quad
  \Sbf^{\prime} = \Rbf \otimes \Jbf^* - \Sbf \otimes \Ibf^*.
  \eea
  Then $\{\Pbf^{\prime}, \Qbf^{\prime}, \Rbf^{\prime},  \Sbf^{\prime}\}$ is a polyphase GCA quad of size $s_1t_1\times s_2t_2\times \cdots\times s_rt_r$
\end{theorem}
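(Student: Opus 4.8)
The plan is to notice that the four output arrays are exactly the quantities produced by Lemma~\ref{lem:compromise} in the case $m=4$, $n=2$, and then to run the polynomial-ring argument already used for Theorem~\ref{thm:cross} and Theorem~\ref{thm:N-ary}. First I would pass to polynomials via Proposition~\ref{prop:kron}: writing $\tbf=(t_1,\dots,t_r)$ and setting $a_1=P(\zbf^{\tbf})$, $a_2=Q(\zbf^{\tbf})$, $a_3=R(\zbf^{\tbf})$, $a_4=S(\zbf^{\tbf})$ together with $b_1=I(\zbf)$, $b_2=J(\zbf)$, each Kronecker product in the statement becomes a product $a_ib_j$ of ring elements. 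Reading off the definitions then gives $P'=a_1b_1+a_2b_2$, $Q'=a_1b_2^{*}-a_2b_1^{*}$, $R'=a_3b_1+a_4b_2$, $S'=a_3b_2^{*}-a_4b_1^{*}$, which are precisely $c_{11},d_{11},c_{21},d_{21}$ in the notation of Lemma~\ref{lem:compromise}.

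With this identification, Lemma~\ref{lem:compromise} immediately yields
\[
P'(P')^{*}+Q'(Q')^{*}+R'(R')^{*}+S'(S')^{*}=\Big(\sum_{i=1}^{4}a_ia_i^{*}\Big)\big(b_1b_1^{*}+b_2b_2^{*}\big).
\]
The first factor is the power spectrum of the quad $\{\Pbf,\Qbf,\Rbf,\Sbf\}$ evaluated at $\zbf^{\tbf}$; since this quad is a polyphase GCA set, Definition~\ref{def:GCA_3} makes it a single monomial of total weight $4\prod_k s_k$, and the substitution $z_k\mapsto z_k^{t_k}$ leaves it a monomial. The second factor is a monomial of weight $\prod_k t_k$ because $\{\Ibf,\Jbf\}$ is a GCA pair. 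Hence the product is a monomial of weight $4\prod_k s_kt_k$, so by Definition~\ref{def:GCA_3} the quad $\{\Pbf',\Qbf',\Rbf',\Sbf'\}$ is autocorrelation complementary at the output size $s_1t_1\times\cdots\times s_rt_r$. The accounting of the $z$-exponents — that, under the single involution $*$ of (\ref{eqastr}) attached to the output size, $\sum_i a_ia_i^{*}$ and $b_1b_1^{*}+b_2b_2^{*}$ are the stated monomials — is the same routine verification carried out for Theorem~\ref{thm:cross}, so I would only remark that it carries over rather than repeat it.

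The one genuinely new point, and the step I expect to be the main obstacle, is preserving the polyphase property, since the composition $+$ in $\Pbf\otimes\Ibf+\Qbf\otimes\Jbf$ risks adding two unit roots at a common entry. This is exactly where the disjointness of $\{\Ibf,\Jbf\}$ is used. Because $\Pbf$ and $\Qbf$ have no zero entries, $\Pbf\otimes\Ibf$ is supported precisely where $\Ibf$ is nonzero and $\Qbf\otimes\Jbf$ precisely where $\Jbf$ is nonzero; as the disjoint pair has complementary supports covering the whole index set, at each fine position exactly one summand is nonzero and equals a product of two unit roots, so $\Pbf'$ is everywhere unimodular. The identical argument applied to $\{\Ibf^{*},\Jbf^{*}\}$ — whose supports remain disjoint and complementary because flipping and conjugating only permutes the support pattern — handles $\Qbf'$, and the two checks for $\Rbf'$ and $\Sbf'$ are verbatim copies with $\Rbf,\Sbf$ in place of $\Pbf,\Qbf$. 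Combining the complementarity from Lemma~\ref{lem:compromise} with this polyphase check finishes the proof.
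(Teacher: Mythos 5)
Your proposal is correct and follows essentially the same route as the paper: the paper's proof also invokes Lemma \ref{lem:compromise} with $m=4$, $n=2$, makes the identical assignments $a_1=P(\zbf^{\tbf}),\dots,b_2=J(\zbf)$, $c_{11}=P'(\zbf)$, $d_{11}=Q'(\zbf)$, $c_{21}=R'(\zbf)$, $d_{21}=S'(\zbf)$, and concludes via Definition \ref{def:GCA_3}, attributing the polyphase property to the disjoint structure of $\{\Ibf,\Jbf\}$. Your only addition is spelling out the support argument for unimodularity, which the paper leaves as a one-line remark.
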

\begin{proof}
By Lemma \ref{lem:compromise}, set $m=4, n=2$, then we have
\ben
\sum_{i=1}^{2}\left(c_{i1}c_{i1}^{*} + d_{i1}d_{i1}^{*}\right) = \left(\sum_{i=1}^{4} a_i a_i^*\right) \left(\sum_{j=1}^{2} b_j b_j^*\right)
\een
if given 
\bea 
  c_{i1} &= a_{2i-1} b_{1} + a_{2i}b_{2},\\
  d_{i1} &= a_{2i-1} b_{2}^* - a_{2i}b_{1}^*.
 \eea
 Next let $a_1 = P(\zbf^{\tbf})$, $a_2 = Q(\zbf^{\tbf})$, $a_3 = R(\zbf^{\tbf})$, $a_4 = S(\zbf^{\tbf})$, $b_1 = I(\zbf)$, $b_2 = J(\zbf)$, $c_{11} = P^{\prime}(\zbf)$, $d_{11} = Q^{\prime}(\zbf)$, $c_{21} = R^{\prime}(\zbf)$, $d_{21} = S^{\prime}(\zbf)$, and $*$ is as defined in (\ref{eqastr}). Following the same procedures of algebra manipulations in the proof of prior theorem, we have 
 \ben
\sum_{i=1}^{2}\left(c_{i1}c_{i1}^{*} + d_{i1}d_{i1}^{*}\right) = 4\prod_{i=1}^r s_it_i z_i^{s_it_i-1}.
\een
By Definition \ref{def:GCA_3}, $\{\Pbf^{\prime}, \Qbf^{\prime}, \Rbf^{\prime},  \Sbf^{\prime}\}$ is a GCA quad. Their entries are polyphase due to the disjoint structure of $\{\Ibf, \Jbf\}$.
\end{proof}

The disjoint GCA pair in the above theorem can be the intermediate product $\{\frac{1}{2} \left(\Cbf+\Dbf\right), \frac{1}{2} \left(\Cbf-\Dbf\right)\}$ in Theorem \ref{thm:binary} or $\{\Xbf, \Ybf\}$ in Theorem \ref{thm:G-N-ary}.

Feeding into Theorem \ref{thm:baseArray} the binary GCM quad constructed in the remark following Theorem \ref{thm:quad} and a disjoint GCM pair of size $g_b^{(1)}\times g_b^{(2)}$, we have the following corollary about the existence pattern of binary GCM quads:
\begin{Corollary} \label{binary_GCM_quad_size}
  The feasible sizes of binary GCM quads may be of form $g_b^{(1)}m\times g_b^{(2)}n$ or $g_b^{(1)}\times g_b^{(2)}mn$ where $m, n\in \{2s+1\vert 0\leq s\leq 38, or \ s=g_b^{(3)}\} \cup \{2(g_b^{(4)}+g_b^{(5)})\}$ and $g_b^{(1)}, \cdots, g_b^{(5)}$ are binary Golay numbers. Specifically, the sizes within $78\times 78$ can be covered.
\end{Corollary}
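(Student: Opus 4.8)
The plan is to chain three ingredients that are already in place: the weight-deficient quad constructions of Proposition~\ref{prop:interleave} and Proposition~\ref{prop:concate_zero} feeding Theorem~\ref{thm:quad} (as summarized in the remark following that theorem), the binary GCM pair construction of Theorem~\ref{thm:binary}, and the coordinate-wise dilation provided by Theorem~\ref{thm:baseArray}. The two claimed forms correspond to the two outputs $m\times n$ and $1\times mn$ of that remark, each subsequently scaled by binary Golay numbers through Theorem~\ref{thm:baseArray}.

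First I would fix admissible $m,n$ and invoke the remark after Theorem~\ref{thm:quad} to obtain a binary GCM quad $\{\Pbf,\Qbf,\Rbf,\Sbf\}$ of size $m\times n$ (respectively $1\times mn$): the odd values $m,n\le 77$ arise from base sequences $BS(s+1,s)$ with $0\le s\le 38$, the case $s=g_b^{(3)}$ from $BS(g+1,g)$ with $g$ a binary Golay number, and the even values $2(g_b^{(4)}+g_b^{(5)})$ from Proposition~\ref{prop:concate_zero} applied to two binary Golay sequence pairs. Next, for binary Golay numbers $g_b^{(1)},g_b^{(2)}$, Theorem~\ref{thm:binary} applied to Golay sequence pairs of lengths $g_b^{(1)}$ and $g_b^{(2)}$ yields a binary GCM pair $\{\Cbf,\Dbf\}$ of size $g_b^{(1)}\times g_b^{(2)}$; its intermediate product $\{\frac{1}{2}(\Cbf+\Dbf),\,\frac{1}{2}(\Cbf-\Dbf)\}$ is a disjoint GCM pair of the same size, exactly as noted after Theorem~\ref{thm:baseArray}. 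Feeding $\{\Pbf,\Qbf,\Rbf,\Sbf\}$ together with this disjoint pair into Theorem~\ref{thm:baseArray} then produces a binary GCM quad of size $g_b^{(1)}m\times g_b^{(2)}n$ (respectively $g_b^{(1)}\times g_b^{(2)}mn$), which establishes the two forms.

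For the coverage of all sizes within $78\times 78$, the key observation is that the two dimensions decouple, so it suffices to write each target $p\in\{1,\dots,78\}$ as $g_b\cdot m$ with $g_b$ a binary Golay number and $m$ an admissible odd value. I would use the $2$-adic factorization $p=2^{k}o$ with $o$ odd: the factor $2^{k}$ is a power of two and hence a binary Golay number, while the odd part $o=p/2^{k}\le 78$ is necessarily $\le 77$ and therefore lies in $\{2s+1:0\le s\le 38\}$. Writing $p=2^{k_1}o_1$ and $q=2^{k_2}o_2$ and setting $g_b^{(1)}=2^{k_1},\,m=o_1,\,g_b^{(2)}=2^{k_2},\,n=o_2$, the first form realizes $g_b^{(1)}m\times g_b^{(2)}n=p\times q$ for every pair $(p,q)$ with $p,q\le 78$.

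The hard part is not any single structural step but the bookkeeping in the coverage claim, namely checking that the odd parts remain inside the computationally verified base-sequence range $s\le 38$ and that the required powers of two do not outrun the Golay-number supply. Both are immediate once the $2^{k}o$ decomposition is in hand: $p\le 78$ forces $o\le 77$ (whence $s\le 38$) and $2^{k}\le 64$, so the verification is arithmetic rather than combinatorial. Throughout, the binary (polyphase) property is inherited automatically---from the disjointness of $\{\Ibf,\Jbf\}$ built into Theorem~\ref{thm:baseArray}, and from the quasi-symmetric, conjoint, and disjoint structure that Propositions~\ref{prop:interleave}--\ref{prop:concate_zero} guarantee for the inputs to Theorem~\ref{thm:quad}.
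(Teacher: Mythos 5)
Your proposal is correct and follows essentially the same route as the paper: the corollary is obtained exactly by feeding the binary GCM quads of size $m\times n$ or $1\times mn$ from the remark after Theorem \ref{thm:quad} into Theorem \ref{thm:baseArray} together with a disjoint GCM pair of size $g_b^{(1)}\times g_b^{(2)}$ arising as the intermediate product in Theorem \ref{thm:binary}. Your explicit $2$-adic decomposition $p=2^k o$ with odd part $o\le 77$ is the coverage argument the paper leaves implicit, and it is a welcome clarification rather than a deviation.
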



By Theorem \ref{thm:quad}, the existence pattern of quaternary GCM quads is much denser than the binary counterpart. Specifically, we have the following corollary:
\begin{Corollary} \label{double_sum_size}
  There exist quaternary GCM quads of size $2s_1(t_2+t_3) \times 2t_1(s_2+s_3)$ where $s_1 \times s_2$, $s_1 \times s_3$, $t_2 \times t_1$ and $t_3 \times t_1$ are the feasible sizes of quaternary GCM pairs.
\end{Corollary}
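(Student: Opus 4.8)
The plan is to realize this corollary as a direct application of Theorem \ref{thm:quad}, which converts a pair of weight-deficient GCA quads into a polyphase GCA quad whose size in each dimension is the product of the corresponding dimensions of the two inputs. The factors of $2$ and the summations $t_2+t_3$, $s_2+s_3$ appearing in the target size strongly suggest that each weight-deficient input quad should be produced by Proposition \ref{prop:concate_zero}, which fuses two GCA pairs that differ in a single dimension into a weight-deficient quad whose size in that dimension becomes twice the sum of the two original lengths. Thus the whole proof reduces to selecting the four given quaternary GCM pairs, concatenating them in the correct dimensions, and reading off the final product size.

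Concretely, I would first build a weight-deficient GCA quad of size $s_1 \times 2(s_2+s_3)$ by feeding into Proposition \ref{prop:concate_zero} the quaternary GCM pair of size $s_1\times s_2$ (playing the role of $\{\Abf,\Bbf\}$) and the quaternary GCM pair of size $s_1\times s_3$ (playing the role of $\{\Cbf,\Dbf\}$), concatenating along the second dimension. These two pairs share the first dimension $s_1$ and differ only in the second, so the hypotheses of Proposition \ref{prop:concate_zero} are satisfied and the doubled dimension becomes $2(s_2+s_3)$. Symmetrically, I would build a second weight-deficient GCA quad of size $2(t_2+t_3)\times t_1$ by feeding the GCM pairs of size $t_2\times t_1$ and $t_3\times t_1$ into Proposition \ref{prop:concate_zero}, this time concatenating along the first dimension, since these two pairs share the second dimension $t_1$ and differ only in the first.

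Finally, I would feed these two weight-deficient quads, of sizes $m_1\times m_2 = s_1\times 2(s_2+s_3)$ and $n_1\times n_2 = 2(t_2+t_3)\times t_1$, into Theorem \ref{thm:quad}. Its output is a polyphase GCA quad of size $m_1n_1\times m_2n_2 = 2s_1(t_2+t_3)\times 2t_1(s_2+s_3)$, exactly the claimed size; and since all four seed pairs are quaternary, the resulting quad is quaternary. The only point requiring care, and the closest thing to an obstacle, is that Theorem \ref{thm:quad} preserves the polyphase property only when its inputs carry the quasi-symmetric, conjoint, and disjoint zero pattern. But this is automatic here: the discussion following Proposition \ref{prop:concate_zero} guarantees that every weight-deficient quad it produces already has precisely this structure, so no additional verification is needed and the corollary follows. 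The remaining work is purely the bookkeeping of matching each GCM pair to the right slot and concatenation dimension so that the Kronecker products in Theorem \ref{thm:quad} assemble the stated size.
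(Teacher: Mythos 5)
Your proposal is correct and follows essentially the same route as the paper: the paper likewise builds the two weight-deficient quads of sizes $s_1\times 2(s_2+s_3)$ and $2(t_2+t_3)\times t_1$ via Proposition \ref{prop:concate_zero} from the four given GCM pairs and then feeds them into Theorem \ref{thm:quad}. Your observation that the quasi-symmetric, conjoint, and disjoint structure guaranteed by Proposition \ref{prop:concate_zero} is what preserves the polyphase property matches the paper's reasoning exactly.
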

\begin{proof}
\begin{enumerate}
  \item Construct four quaternary GCM pairs $\{\Abf_1, \Bbf_1\}$ of size $s_1 \times s_2$, $\{\Cbf_1, \Dbf_1\}$ of size $s_1 \times s_3$, $\{\Ebf_1, \Fbf_1\}$ of size $t_2 \times t_1$ and $\{\Gbf_1, \Hbf_1\}$ of size $t_3 \times t_1$ by Theorem \ref{thm:G-N-ary}.
  \item By Proposition \ref{prop:concate_zero}, use $\{\Abf_1, \Bbf_1, \Cbf_1, \Dbf_1\}$ and $\{\Ebf_1, \Fbf_1, \Gbf_1, \Hbf_1\}$ to construct two quasi-symmetric, conjoint and disjoint GCA quads $\{\Abf_2, \Bbf_2, \Cbf_2, \Dbf_2\}$ of size $s_1 \times 2(s_2+s_3)$ and $\{\Ebf_2, \Fbf_2, \Gbf_2, \Hbf_2\}$ of size $2(t_2+t_3) \times t_1$ respectively.
  \item Feed $\{\Abf_2, \Bbf_2, \Cbf_2, \Dbf_2\}$ and $\{\Ebf_2, \Fbf_2, \Gbf_2, \Hbf_2\}$ into Theorem \ref{thm:quad} to construct a quaternary GCM quad $\{\Pbf, \Qbf, \Rbf, \Sbf\}$ of size $2s_1(t_2+t_3) \times 2t_1(s_2+s_3)$.
\end{enumerate}  
\end{proof}
\begin{remark}
  Interestingly, for $s_1 \times s_2$, $s_1 \times s_3$, $t_2 \times t_1$ and $t_3 \times t_1$ being feasible sizes of quaternary GCM pairs as regulated in Corollary \ref{Qsize},  $s_2+s_3$ or $t_2+t_3$ can cover all the positive integers within 1000 except for 799 and 959. By contrast, the set of the feasible lengths of the sequences quad constructed in \cite{craigen2002complex} does not cover $173$ integers within $1000$. The improved result owes to the more flexible matrix size in one dimension, since by Corollary \ref{Qsize}, it is the product of the sizes in all dimensions that must be a quaternary Golay number, rather than the size in each dimension. For example, to cover the number $87$, we only need two GCM pairs of sizes $n\times 9$ and $n\times 78$, respectively, where $n$ is an indeterminate. If $n=1$, there does not exist a quaternary Golay sequence pair of length $9$ since $9$ is not a quaternary Golay number. However, there exists a quaternary GCM pair of size $2\times 9$ since $18$ is a quaternary Golay number. Besides, by Corollary \ref{Qsize}, there must be at least a BBGN in the factorization of $n$ since there are two $BQGNs$ in $9$. To cover the $173$ integers except for $759$ and $959$, it is revealed by computational search that most of them ($146$ in $171$) have a restriction of only one additional BBGN in the factorization of $n$.
\end{remark}

Nevertheless, the size $2s_1(t_2+t_3) \times 2t_1(s_2+s_3)$ is doubled due to the rigorous quasi-symmetric condition. By the following theorem as the second application of Lemma \ref{lem:compromise}, we do not have to double the size in one dimension.
\begin{theorem} \label{thm:compromise}
  Given a polyphase GCA quad $\{\Abf, \Bbf, \Cbf, \Dbf\}$, where the size of $\Abf$ and $\Bbf$ is $s_1\times \cdots\times s_i\times \cdots\times s_r$, and the size of $\Cbf$ and $\Dbf$ is $s_1\times \cdots\times s_i^\prime\times \cdots\times s_r$, and a polyphase GCA pair$\{\Ibf, \Jbf\}$ of size $t_1\times \cdots\times t_r$, suppose
  \bea \label{eq:disjoint_conca}
  \Abf^{\prime} &= \Abf \vert {\bf{0}}_{s_{i}^\prime}, \quad \Bbf^{\prime} &= \Bbf \vert {\bf{0}}_{s_{i}^\prime},\\
  \Cbf^{\prime} &= {\bf{0}}_{s_i} \vert \Cbf, \quad \Dbf^{\prime} &= {\bf{0}}_{s_i} \vert \Dbf,
  \eea
  where $\vert$ denotes concatenating two arrays in the $i$-th dimension, ${\bf{0}}_{s_i}$ $\left({\bf{0}}_{s_i^{\prime}}\right)$ denotes the zero array of the same size as $\Abf$ $\left(\Cbf\right)$. Suppose
  \bea
  \Ebf &= \Abf^{\prime} \otimes \Ibf + \Cbf^{\prime} \otimes \Jbf,\\
  \Fbf &= \Abf^{\prime} \otimes \Jbf^* - \Cbf^{\prime} \otimes \Ibf^*,\\
  \Gbf &= \Bbf^{\prime} \otimes \Ibf + \Dbf^{\prime} \otimes \Jbf,\\
  \Hbf &= \Bbf^{\prime} \otimes \Jbf^* - \Dbf^{\prime} \otimes \Ibf^*.
  \eea
  Then $\{\Ebf, \Fbf, \Gbf, \Hbf\}$ is a polyphase GCA quad of size $s_1t_1\times \cdots\times \left(s_i+s_i^\prime\right)t_i\times \cdots\times s_rt_r$.
\end{theorem}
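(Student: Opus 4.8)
The plan is to follow the proof of Theorem~\ref{thm:baseArray} almost verbatim, applying Lemma~\ref{lem:compromise} with $m=4$, $n=2$, except that the input quad is first zero-padded so that its four members share a common size while $\Abf^{\prime},\Cbf^{\prime}$ (and $\Bbf^{\prime},\Dbf^{\prime}$) become disjoint. First I would use Proposition~\ref{prop:kron} to turn the Kronecker products into polynomial products: taking $*$ to be the automorphism~(\ref{eqastr}) and setting $a_{1}=A^{\prime}(\zbf^{\tbf})$, $a_{2}=C^{\prime}(\zbf^{\tbf})$, $a_{3}=B^{\prime}(\zbf^{\tbf})$, $a_{4}=D^{\prime}(\zbf^{\tbf})$, $b_{1}=I(\zbf)$, $b_{2}=J(\zbf)$, one checks that the polynomials of $\Ebf,\Fbf,\Gbf,\Hbf$ are exactly $c_{11},d_{11},c_{21},d_{21}$ in the notation of Lemma~\ref{lem:compromise}. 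That lemma then gives
\[
EE^{*}+FF^{*}+GG^{*}+HH^{*}=\Bigl(\sum_{\ell=1}^{4}a_{\ell}a_{\ell}^{*}\Bigr)\bigl(II^{*}+JJ^{*}\bigr),
\]
so everything reduces to the two factors on the right.

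The second factor is immediate: since $\{\Ibf,\Jbf\}$ is a GCA pair, Definition~\ref{def:GCA_3} gives $II^{*}+JJ^{*}=2\prod_{k}t_{k}z_{k}^{t_{k}-1}$. For the first factor I would record the padding relations $A^{\prime}(\zbf)=A(\zbf)$, $B^{\prime}(\zbf)=B(\zbf)$, $C^{\prime}(\zbf)=z_{i}^{s_{i}}C(\zbf)$, $D^{\prime}(\zbf)=z_{i}^{s_{i}}D(\zbf)$ and, by tracking the exponents in~(\ref{eqastr}) against the common padded size $s_{i}+s_{i}^{\prime}$ in the $i$-th dimension, deduce $A^{\prime}A^{\prime *}=z_{i}^{s_{i}^{\prime}}AA^{*}$, $B^{\prime}B^{\prime *}=z_{i}^{s_{i}^{\prime}}BB^{*}$, $C^{\prime}C^{\prime *}=z_{i}^{s_{i}}CC^{*}$ and $D^{\prime}D^{\prime *}=z_{i}^{s_{i}}DD^{*}$.

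The main obstacle is that $\{\Abf,\Bbf\}$ and $\{\Cbf,\Dbf\}$ need not be complementary, so neither $AA^{*}+BB^{*}$ nor $CC^{*}+DD^{*}$ is a monomial on its own, and the two groups even carry the different prefactors $z_{i}^{s_{i}^{\prime}}$ and $z_{i}^{s_{i}}$. The resolution is to pass to the size-free automorphism $\star$ of Proposition~\ref{prop:interleave}, under which prepending or appending zeros leaves the power spectrum unchanged, so that
\[
A^{\prime}A^{\prime\star}+B^{\prime}B^{\prime\star}+C^{\prime}C^{\prime\star}+D^{\prime}D^{\prime\star}=AA^{\star}+BB^{\star}+CC^{\star}+DD^{\star}.
\]
Reading the GCA-quad hypothesis through Definition~\ref{def:GCA_2}, this common value is the constant $2\bigl(\prod_{k}\hat{s}_{k}\bigr)$, where $\hat{s}_{i}=s_{i}+s_{i}^{\prime}$ and $\hat{s}_{k}=s_{k}$ for $k\neq i$; equivalently, $\{\Abf^{\prime},\Bbf^{\prime},\Cbf^{\prime},\Dbf^{\prime}\}$ is a genuine GCA quad of uniform size. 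Translating back to $*$ through~(\ref{ppp}) reinstates the single monomial prefactor $\prod_{k}z_{k}^{\hat{s}_{k}-1}$ shared by all four terms (this is exactly where the two padding shifts $z_{i}^{s_{i}^{\prime}}$ and $z_{i}^{s_{i}}$ align), and after the substitution $\zbf\mapsto\zbf^{\tbf}$ I obtain $\sum_{\ell=1}^{4}a_{\ell}a_{\ell}^{*}=2\bigl(\prod_{k}\hat{s}_{k}\bigr)\prod_{k}z_{k}^{t_{k}(\hat{s}_{k}-1)}$.

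Multiplying the two factors collapses the exponent of each $z_{k}$ to $\hat{s}_{k}t_{k}-1$, yielding $EE^{*}+FF^{*}+GG^{*}+HH^{*}=4\prod_{k}\hat{s}_{k}t_{k}\,z_{k}^{\hat{s}_{k}t_{k}-1}$, which by Definition~\ref{def:GCA_3} certifies $\{\Ebf,\Fbf,\Gbf,\Hbf\}$ as a GCA quad of size $s_{1}t_{1}\times\cdots\times(s_{i}+s_{i}^{\prime})t_{i}\times\cdots\times s_{r}t_{r}$. It remains to confirm that the entries stay polyphase, and here I would argue by disjointness: $\Abf^{\prime}$ occupies indices $0,\dots,s_{i}-1$ and $\Cbf^{\prime}$ occupies $s_{i},\dots,s_{i}+s_{i}^{\prime}-1$ along the $i$-th dimension, so after the Kronecker products $\Abf^{\prime}\otimes\Ibf$ and $\Cbf^{\prime}\otimes\Jbf$ live on the disjoint blocks $0,\dots,s_{i}t_{i}-1$ and $s_{i}t_{i},\dots,(s_{i}+s_{i}^{\prime})t_{i}-1$; the same holds for the two summands of $\Fbf$ and for $\Gbf,\Hbf$. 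Hence no sum of two unit roots is ever formed and every entry is a single unit root or zero. Note that, in contrast to Theorem~\ref{thm:baseArray}, disjointness of $\{\Ibf,\Jbf\}$ is not required, since the zero-padding alone supplies the disjoint block structure.
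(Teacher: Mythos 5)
Your proof is correct and follows the same route as the paper, whose own proof simply states that the argument mirrors Theorem \ref{thm:baseArray} with the disjointness now residing in the zero-padded quad rather than in the pair $\{\Ibf,\Jbf\}$ --- exactly the observation you make at the end. You in fact supply a detail the paper leaves implicit: the alignment of the monomial prefactors $z_i^{s_i'}$ and $z_i^{s_i}$ via the size-free automorphism $\star$, which is needed because $\{\Abf,\Bbf\}$ and $\{\Cbf,\Dbf\}$ are not separately complementary.
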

\begin{proof}
The proof of Theorem \ref{thm:compromise} is similar to the proof of Theorem \ref{thm:baseArray}, except that the GCA quad instead of the GCA pair is disjoint.
\end{proof}

The way of assigning zeros in Theorem \ref{thm:compromise} is simpler than Proposition \ref{prop:interleave} and Proposition \ref{prop:concate_zero}, since there exists only one composition $+$ in each equation of (\ref{compromise}) in Lemma \ref{lem:compromise}.

\begin{Corollary} 
  There exist quaternary GCM quads of size $s_1t_1 \times (s_2+s_3)t_2$ where $s_1 \times s_2$, $s_1 \times s_3$, $t_1 \times t_2$ are the feasible sizes of quaternary GCM pairs.
\end{Corollary}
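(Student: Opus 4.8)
The plan is to reuse the three-step recipe behind Corollary~\ref{double_sum_size}, but to replace the Proposition~\ref{prop:concate_zero}/Theorem~\ref{thm:quad} pipeline by a single application of Theorem~\ref{thm:compromise}; this is exactly what removes the size-doubling present in the earlier corollary. I specialize Theorem~\ref{thm:compromise} to the matrix case $r=2$ with the concatenation carried out in the second dimension, i.e. $i=2$.

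First I would invoke Theorem~\ref{thm:G-N-ary} three times to realize the prescribed pairs as genuine quaternary GCM pairs: $\{\Abf,\Bbf\}$ of size $s_1\times s_2$, $\{\Cbf,\Dbf\}$ of size $s_1\times s_3$, and $\{\Ibf,\Jbf\}$ of size $t_1\times t_2$. Each exists precisely because $s_1\times s_2$, $s_1\times s_3$, and $t_1\times t_2$ are assumed feasible, i.e. their products are quaternary Golay numbers admitting the factorization regulated by Corollary~\ref{Qsize}.

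Next I would assemble $\{\Abf,\Bbf\}$ and $\{\Cbf,\Dbf\}$ into a single polyphase GCA quad $\{\Abf,\Bbf,\Cbf,\Dbf\}$. Since $\Rbf_{A}+\Rbf_{B}$ and $\Rbf_{C}+\Rbf_{D}$ are each a scaled unit pulse, their sum is again a scaled unit pulse, so the four arrays satisfy Definition~\ref{def:GCA_1} with $L=4$; this is the ``composed of two GCA pairs'' observation noted after Proposition~\ref{prop:concate_zero}. The two constituent pairs agree in the first dimension ($s_1$) and differ only in the second ($s_2$ versus $s_3$), which is exactly the input format demanded by Theorem~\ref{thm:compromise}, with $s_i=s_2$ and $s_i'=s_3$. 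Feeding this quad together with the pair $\{\Ibf,\Jbf\}$ of size $t_1\times t_2$ into Theorem~\ref{thm:compromise} then produces a polyphase GCA quad of size $s_1t_1\times(s_2+s_3)t_2$, whose entries stay quaternary because the zero-padding that defines $\Abf',\Cbf'$ (and $\Bbf',\Dbf'$) makes the two summands in each of $\Ebf,\Fbf,\Gbf,\Hbf$ disjoint, so no unit roots are ever added.

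I do not anticipate a serious obstacle: the content is dimension-bookkeeping layered on top of the already-established Theorems~\ref{thm:G-N-ary} and~\ref{thm:compromise}. The single point that warrants care is checking that concatenating two GCM pairs of \emph{different} second-dimension extents ($s_2$ and $s_3$) yields a legitimate GCA quad and a legitimate input to Theorem~\ref{thm:compromise}. Concretely, one must observe that the aperiodic autocorrelations $\Rbf_{A},\Rbf_{B}$ of size $(2s_1-1)\times(2s_2-1)$ and $\Rbf_{C},\Rbf_{D}$ of size $(2s_1-1)\times(2s_3-1)$ are all centered at the origin, so their entrywise sum is well defined and collapses to the unit pulse. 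Once this alignment is granted, the corollary follows immediately from one call to Theorem~\ref{thm:compromise}.
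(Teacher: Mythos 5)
Your proposal is correct and matches the construction the paper intends: the corollary is stated as an immediate consequence of Theorem \ref{thm:compromise}, obtained by stacking the two pairs $\{\Abf,\Bbf\}$ (size $s_1\times s_2$) and $\{\Cbf,\Dbf\}$ (size $s_1\times s_3$) into a GCA quad and feeding it, together with the pair $\{\Ibf,\Jbf\}$ of size $t_1\times t_2$, into that theorem with $i=2$. Your extra remark about aligning the differently sized autocorrelations at the center is a valid and slightly more careful justification of the ``quad composed of two pairs'' step that the paper only notes in passing after Proposition \ref{prop:concate_zero}.
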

Specifically, we may construct a quaternary GCM quad of size $s_1t_1 \times (s_2+s_3)$ where $t_1$ is a BQGN and $s_2+s_3$ can cover all the positive integers within $1000$ but $799$ and $959$, with an additional restriction on $s_1$, as mentioned in the remark following Corollary \ref{double_sum_size}. We note that the restriction of one additional BBGN in the factorization of $s_1$ can be compensated in the sense of gluing the BQGN $t_1$, due to the disjoint structure of (\ref{eq:disjoint_conca}). E.g., we can construct a quaternary GCM quad of size $36\times 87$: since $87=9+78$, by Theorem \ref{thm:compromise} we only need to feed in three quaternary GCM pairs of sizes $12\times 9$, $12\times78$ and $3\times1$ respectively, which exist according to Corollary \ref{Qsize}. Otherwise, without gluing $t_1$, we need a quaternary GCM pair of size $36\times 9$ which does not exist.

To cover the number $799$, 
noting that $799=(2\times8+1)\times(2\times23+1)$,  we first construct two GCM quads of size $1\times 17$ and $1\times47$ from the base sequences $BS(9, 8)$ and $BS(24, 23)$ by Proposition \ref{prop:interleave} respectively. Then feed them into Theorem \ref{thm:quad} to construct an GCM quad of size $1\times799$. Finally we may enlarge the first dimension via Theorem \ref{thm:baseArray}.

Noting that $959=(2\times 3+1)\times (5+2^{2}\times 3\times 11)$, we cover the number $959$ in the following steps:
\begin{enumerate}
  \item Construct two quaternary GCM pairs $\{\Abf_1, \Bbf_1\}$ and $\{\Cbf_1, \Dbf_1\}$ of size $1\times5$ and $1\times 132$ respectively.
  \item Feed $\{\Abf_1, \Bbf_1, \Cbf_1, \Dbf_1\}$ and a quaternary GCM pair $\{\Ibf, \Jbf\}$ of size $t_1\times 1$ where $t_1$ is a quaternary Golay number, into Theorem \ref{thm:compromise} to construct a quaternary GCM quad $\{\Abf_2, \Bbf_2, \Cbf_2, \Dbf_2\}$ of size $t_1\times 137$.
  \item Feed $\{\Abf_2, \Bbf_2, \Cbf_2, \Dbf_2\}$ into Proposition \ref{prop:concate_zero} to construct an GCM quad $\{\Ebf, \Fbf, \Gbf, \Hbf\}$ of size $4t_1\times 137$.
  \item Construct an GCM quad $\{\Abf, \Bbf, \Cbf, \Dbf\}$ of size $1\times 7$ from $BS(4, 3)$ by Proposition \ref{prop:interleave}.
  \item Feed $\{\Abf, \Bbf, \Cbf, \Dbf\}$ and $\{\Ebf, \Fbf, \Gbf, \Hbf\}$ into Theorem \ref{thm:quad} to construct a quaternary GCM quad $\{\Pbf, \Qbf, \Rbf, \Sbf\}$ of size $4t_1\times 959$.
\end{enumerate}

Thus we have the following corollary:
\begin{Corollary} \label{quaternary_GCM_quad_size}
  There exist quaternary GCM quads whose sizes in one dimension can cover all the positive integers within 1000.
\end{Corollary}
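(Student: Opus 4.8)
The plan is to show that every integer $N$ with $1 \le N \le 1000$ arises as the size in one dimension of some quaternary GCM quad, splitting the argument into three regimes. The workhorse is the Corollary following Theorem~\ref{thm:compromise}, which produces quaternary GCM quads of size $s_1 t_1 \times (s_2+s_3) t_2$ whenever $s_1\times s_2$, $s_1\times s_3$, and $t_1\times t_2$ are admissible quaternary GCM pair sizes. Specializing $t_2=1$, the second dimension becomes $s_2+s_3$, so the task reduces to realizing $N$ as a sum $s_2+s_3$ of the reduced parts of two quaternary Golay numbers $s_1 s_2$ and $s_1 s_3$ sharing a common binder $s_1$. First I would invoke the coverage fact recorded in the remark after Corollary~\ref{double_sum_size}: a search over pairs of quaternary Golay numbers (constrained as in Corollary~\ref{Qsize}) with a common factor shows that $s_2+s_3$ attains every value in $[1,1000]$ except $799$ and $959$. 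This disposes of all but two integers in one stroke.

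Next I would handle $N=799$. Since $799=17\times 47=(2\cdot 8+1)(2\cdot 23+1)$, and binary base sequences $BS(9,8)$ and $BS(24,23)$ exist (both $8,23\le 38$, within the computationally verified range), feeding each into Proposition~\ref{prop:interleave} yields weight-deficient GCA quads of sizes $1\times 17$ and $1\times 47$ with the required quasi-symmetric, conjoint, and disjoint structure. Applying Theorem~\ref{thm:quad} to this pair of quads then produces a quaternary GCM quad of size $1\times 799$, as desired; the first dimension may subsequently be enlarged via Theorem~\ref{thm:baseArray} if a nontrivial two-dimensional array is wanted.

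For $N=959$ I would use the longer chained construction. Writing $959=7\times 137=(2\cdot 3+1)(5+132)$ with $132=2^2\cdot 3\cdot 11$, I first build quaternary GCM pairs of sizes $1\times 5$ and $1\times 132$ (both $5$ and $132$ are quaternary Golay numbers, so these exist by Corollary~\ref{Qsize}); I combine them into a quaternary GCM quad of size $t_1\times 137$ by Theorem~\ref{thm:compromise}, using a binder $\{\Ibf,\Jbf\}$ of size $t_1\times 1$; I then apply Proposition~\ref{prop:concate_zero} to obtain a weight-deficient quad of size $4t_1\times 137$; separately I form the $1\times 7$ quad from $BS(4,3)$ via Proposition~\ref{prop:interleave}; and finally Theorem~\ref{thm:quad} fuses these two quads into a quaternary GCM quad of size $4t_1\times 959$. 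Collecting the three regimes establishes the claim.

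The main obstacle is not any single algebraic identity---those are all delivered by Lemmas~\ref{lem:4Identity} and~\ref{lem:compromise}---but rather two arithmetic bookkeeping issues. The first is the coverage claim itself: there is no clean closed form for which $N$ are expressible as $s_2+s_3$, so this step rests on an exhaustive search over admissible factor pairs and must be trusted as a computational result. The second is engineering the two leftover values: one must locate factorizations of $799$ and $959$ whose factors simultaneously meet the stringent admissibility constraints (the base-sequence existence range in Proposition~\ref{prop:interleave} and the quaternary-Golay-number form in Corollary~\ref{Qsize}), so that every intermediate pair and quad in the chain actually exists. Getting $959$ through requires the delicate $5+132$ split together with the binder $t_1$ absorbing the extra BBGN requirement, which is the most intricate part of the argument.
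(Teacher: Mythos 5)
Your proposal is correct and follows essentially the same route as the paper: the same reduction to the $s_2+s_3$ coverage claim from the remark after Corollary \ref{double_sum_size}, the same factorization $799=17\times 47$ handled via $BS(9,8)$, $BS(24,23)$, Proposition \ref{prop:interleave} and Theorem \ref{thm:quad}, and the same five-step chain for $959=7\times(5+132)$ through Theorem \ref{thm:compromise}, Proposition \ref{prop:concate_zero}, $BS(4,3)$, and Theorem \ref{thm:quad}. The only addition is your explicit verification that $132$ is a quaternary Golay number and that $8,23\le 38$, which the paper leaves implicit.
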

\section{Conclusions} \label{SEC5}
In the sight of four identities over a commutative ring, we propose some construction methods of the polyphase GCA pairs and the polyphase GCA quads.

The feasible sizes of the quaternary GCA pairs are $s_1\times \cdots\times s_r$, where $s_1\cdots s_r$ is of form $2^{a+u}3^{b}5^{c}11^{d}13^{e}$, $a, b, c, d, e, u \geq 0$, $b+c+d+e \leq a+2u+1$, $u \leq c+e$, and each of $u$ factors of $10, 26$ should not be factorized into different dimensions.

The feasible sizes of the binary GCM quads are of form $g_b^{(1)}m\times g_b^{(2)}n$ or $g_b^{(1)}\times g_b^{(2)}mn$ where $m, n\in \{2s+1\vert 0\leq s\leq 38, or \ s=g_b^{(3)}\} \cup \{2(g_b^{(4)}+g_b^{(5)})\}$ and $g_b^{(1)}, \cdots, g_b^{(5)}$ are binary Golay numbers. Specifically, all the sizes within $78\times 78$ can be covered. We conjecture that the binary GCM quads exist for arbitrary sizes.

For the quaternary GCM quad of size $s_1\times s_2$, all the positive integers within 1000 can be covered for $s_1$ or $s_2$. We also obtain the quaternary GCM quads of size $2s_1(t_2+t_3) \times 2t_1(s_2+s_3)$ where $s_1 \times s_2$, $s_1 \times s_3$, $t_2 \times t_1$ and $t_3 \times t_1$ are the feasible sizes of quaternary GCM pairs.

Our Matlab\textsuperscript{TM} codes that can generate GCMs of known feasible sizes in Corollary \ref{Qsize}, \ref{binary_GCM_quad_size} and \ref{quaternary_GCM_quad_size} are available online: \url{https://github.com/csrlab-fudan/ACM}.
\bibliographystyle{IEEEtran}
\bibliography{gca}

\end{document}